\documentclass[a4paper,11pt]{article}
\usepackage[english]{babel}
\usepackage[utf8]{inputenc}
\usepackage{paralist,url,verbatim,anysize}
\usepackage{amscd}
\usepackage{fancyhdr}
\usepackage{mathrsfs}
\usepackage{nicefrac}
\usepackage[e]{esvect}
\usepackage{amsmath,amsthm,amssymb,amsfonts}
\usepackage{graphicx, graphics}
\usepackage[bookmarksopen=true]{hyperref}
\hypersetup{colorlinks=true, citecolor=blue}
\usepackage[usenames, dvipsnames]{color}
\usepackage{bbm}
\usepackage[font=small,labelfont=bf]{caption}
\usepackage{subcaption}

\theoremstyle{plain}
\newtheorem{theorem}{\bf Theorem}[]

\newtheorem{corollary}[theorem]{Corollary}
\newtheorem{lemma}[theorem]{Lemma}
\newtheorem{proposition}[theorem]{Proposition}
\newtheorem{example}[theorem]{Example}

\newtheorem{thmnonumber}{\bf Main Theorem}

\theoremstyle{definition}
\newtheorem{remark}[theorem]{Remark}
\newtheorem{definition}[theorem]{Definition}

\newcommand{\link}{\mathrm{link}\, }
\renewcommand{\star}{\mathrm{star}\, }

\newcommand{\cm}[1]{}

\definecolor{mypink}{RGB}{215, 5, 234}
\newcommand{\mypink}{\color{mypink} \bf}
\newcommand{\spread}{\rm \textsc{spread}}
\newcommand{\fold}{\rm \textsc{fold}}
\renewcommand{\split}{\textsc{split}}
\newcommand{\unite}{\textsc{unite}}
\newcommand{\longhookrightarrow}{\ensuremath{\lhook\joinrel\relbar\joinrel\rightarrow}}
\begin{document}

\author{
Bruno Benedetti \thanks{Supported by NSF Grant  1600741, ``Geometric Combinatorics and Discrete Morse Theory''.} \\
\small Department of Mathematics\\ \small University of Miami\\
\small Coral Gables, FL 33146\\
\small \url{bruno@math.miami.edu}}
\date{\small August 6, 2016}
\title{Mogami manifolds, nuclei, and $3D$ simplicial gravity}
\maketitle


\begin{abstract}
Mogami introduced in 1995 a large class of triangulated $3$-dimensional pseudomanifolds, henceforth called ``Mogami pseudomanifolds''. He proved an exponential bound for the size of this class in terms of the number of tetrahedra. The question of whether all $3$-balls are Mogami has remained open since; a positive answer would imply a much-desired exponential upper bound for the total number of $3$-balls (and $3$-spheres) with $N$ tetrahedra.

Here we provide a negative answer: many $3$-balls are not Mogami. On the way to this result, we characterize the Mogami property in terms of nuclei, in the sense of Collet--Eckmann--Younan: ``The only three-dimensional Mogami nucleus is the tetrahedron''.
\end{abstract}


\section*{Introduction}

A long standing open question in discrete geometry (also highlighted by Gromov, cf.~\cite[pp.~156--157]{GromovQuestion}) is whether there are exponentially many simplicial complexes homeomorphic to the $3$-sphere, 
or more than exponentially many. What is counted here is the number of \emph{combinatorial types}, in terms of the number $N$ of tetrahedra. This enumeration problem is crucial for the convergence of a certain model in discrete quantum gravity, called ``dynamical triangulations''; see for example the book \cite{ADJ} or the survey \cite{ReggeWilliams} for an introduction. 

By deleting one simplex from any (triangulated) $3$-sphere, we obtain a (triangulated) $3$-ball. Conversely, by coning off the boundary of any $3$-ball, we get a $3$-sphere. This close relation between $3$-spheres and $3$-balls is reflected in the asymptotic enumeration. In fact, it is not hard to see that $3$-balls are more than exponentially many if and only if $3$-spheres are. In other words, one can equivalently rephrase our enumeration problem by replacing ``$3$-sphere'' with ``$3$-ball''. 

To tackle the problem, in 1995 Durhuus and J\'{o}nsson introduced the class of {Locally Constructible} ({``LC}'') manifolds, for which they were able to prove an exponential upper bound \cite{DJ} \cite[Theorem 4.4]{BZ}. The geometric idea is ingeniously simple. Let us agree to call {\mypink tree of $d$-simplices} any triangulated $d$-ball whose dual graph is a tree. Definitorially, {\mypink LC manifolds} are those triangulations of manifolds with boundary that can be obtained from some tree of $d$-simplices by repeatedly gluing together two \emph{adjacent} boundary facets. This adjacency condition for the matching, together with the fact that trees are exponentially many, results in a global exponential upper bound.

Durhuus and J\'{o}nsson conjectured that all $3$-spheres (and all $3$-balls) are LC. This was disproven only recently by the author and Ziegler \cite{BZ}. The key for the disproval was a characterization of the LC property in terms of simple homotopy theory: ``A $3$-sphere is LC if and only if it admits a discrete Morse function with exactly two critical faces'' \cite[Cor.~2.11]{BZ}. Knot theory provides then obstructions to the latter property.

\enlargethispage{3mm}
In 1995, Mogami introduced another class of manifolds, henceforth called ``{\mypink Mogami manifolds}'' \cite{Mog}. Essentially, these are the triangulations of manifolds with boundary that can be obtained from a tree of $d$-simplices by repeatedly gluing together two \emph{incident} boundary facets (Remark \ref{rem:Characterization}). Since ``adjacent'' implies ``incident'', LC obviously implies Mogami. The converse is false: here we prove that a cone is Mogami if and only if its basis is strongly-connected (\textbf{Proposition \ref{prp:Cone}}), 
so many cones are Mogami but not LC. 

Building on top of Durhuus--J\'{o}nsson's work, Mogami was able to show an exponential bound also for his broader class of manifolds. Mogami's argument is based on link planarity and is specific to dimension $3$, whereas Durhuus--J\'{o}nsson's argument can be extended to arbitrary dimension \cite[Theorem 4.4]{BZ}. Still, an interesting conjecture arises from Mogami's work: Perhaps all $3$-balls or $3$-spheres are Mogami, even if not all of them are LC  \cite[p.~161]{Mog}. 

Mogami's conjecture is weaker than Durhuus--J\'{o}nsson's, 
but a positive solution would still solve the enumeration problem: It would imply that there are only exponentially many $3$-balls. Mogami's conjecture is harder to tackle, mainly because we lack a characterization of the Mogami property in terms of simple homotopy theory. Hence  the methods that allowed to solve Durhuus-J\'{o}nsson's conjecture do not extend.

Meanwhile, in 2014 Collet, Eckmann and Younan showed that the total number of $3$-spheres or $3$-balls crucially depends on the number of $3$-balls that have all vertices on the boundary.  More specifically: Let us call {\mypink nucleus} a $3$-ball with all vertices on the boundary, and in which every interior triangle contains at least two interior edges. (The notion was first introduced by Hachimori, under the name ``reduced ball'' \cite[p.~85]{HachiThesis}; the name ``nucleus'' appears in \cite{CEY}). The enumeration problem of $3$-balls (or $3$-spheres) is equivalent to the question of whether nuclei are exponentially many, or more  \cite[Theorem 5.17]{CEY}.

In the present paper, we combine Mogami's and Collet--Eckmann-Younan's intuitions, by characterizing the Mogami property among $3$-balls without interior vertices.
 
\begin{thmnonumber}[Corollary \ref{cor:Nuclei} \& Theorem \ref{thm:final}]  The only Mogami nucleus is the tetrahedron. Moreover, for $3$-balls \emph{without interior vertices}, the following inclusions hold: \em
\[
\{  \textrm{shellable}  \} \subsetneq 
\{  \textrm{LC}  \} =
\{  \textrm{Mogami}  \} \subsetneq
\{  \textrm{collapsible}  \} \subsetneq 
\{  \textrm{all $3$-balls without interior vertices}  \}   .
\]
\end{thmnonumber}

In particular, Bing's thickened house with two rooms \cite{HachiLibrary} and all non-trivial nuclei listed in \cite{CEY} yield counterexamples to Mogami's conjecture. Using knot theory, we can even give a coarse estimate for the asymptotic number of non-Mogami balls: 

\begin{thmnonumber}[Lemma \ref{lem:Knot1} \& Propositions \ref{prop:asymptotics}, \ref{prop:2gen}, \ref{prop:2bridge}.] Let $B$ be a $3$-ball with a knotted spanning edge and with all vertices on the boundary. If the knot is 
\begin{compactitem}
\item a single trefoil, then $B$ can be collapsible but it cannot be Mogami;
\item a connected sum of $2$ or more trefoils, then $B$ is neither Mogami nor collapsible.
\end{compactitem}
Moreover, the number of non-Mogami $3$-balls without interior vertices is asymptotically the same as the total number of $3$-balls without interior vertices. 
\end{thmnonumber}

With this,  the problem of enumerating combinatorial types of $3$-balls remains wide open. All the known strategies expected to succeed in showing an exponential bound (cf.~e.g.~\cite[295--296]{ADJ}) have currently failed. A combinatorial criterion that divides the entire family of triangulated $3$-manifolds (or $d$-manifolds, for any fixed $d$) into nested subfamilies, each of exponential size, was introduced in \cite{Benedetti-DMT4MWB};
metric restrictions on triangulations that also give exponential bounds 
have been discovered in \cite{AB-MGCC}.


\section*{Methods}
Our proof is technical but the main idea is elementary, and best sketched with an example. In Figure \ref{fig:VertexStar}, we show a portion of the boundary of some nicely triangulated $3$-ball $B$; specifically, the star of a vertex $v$ in $\partial B$. For brevity throughout the paper we say ``{\mypink boundary-link of $v$}''  instead of ``link of $v$ in the boundary of $B$''.

\begin{figure}[htbp]
\begin{center}
\includegraphics[scale=0.8]{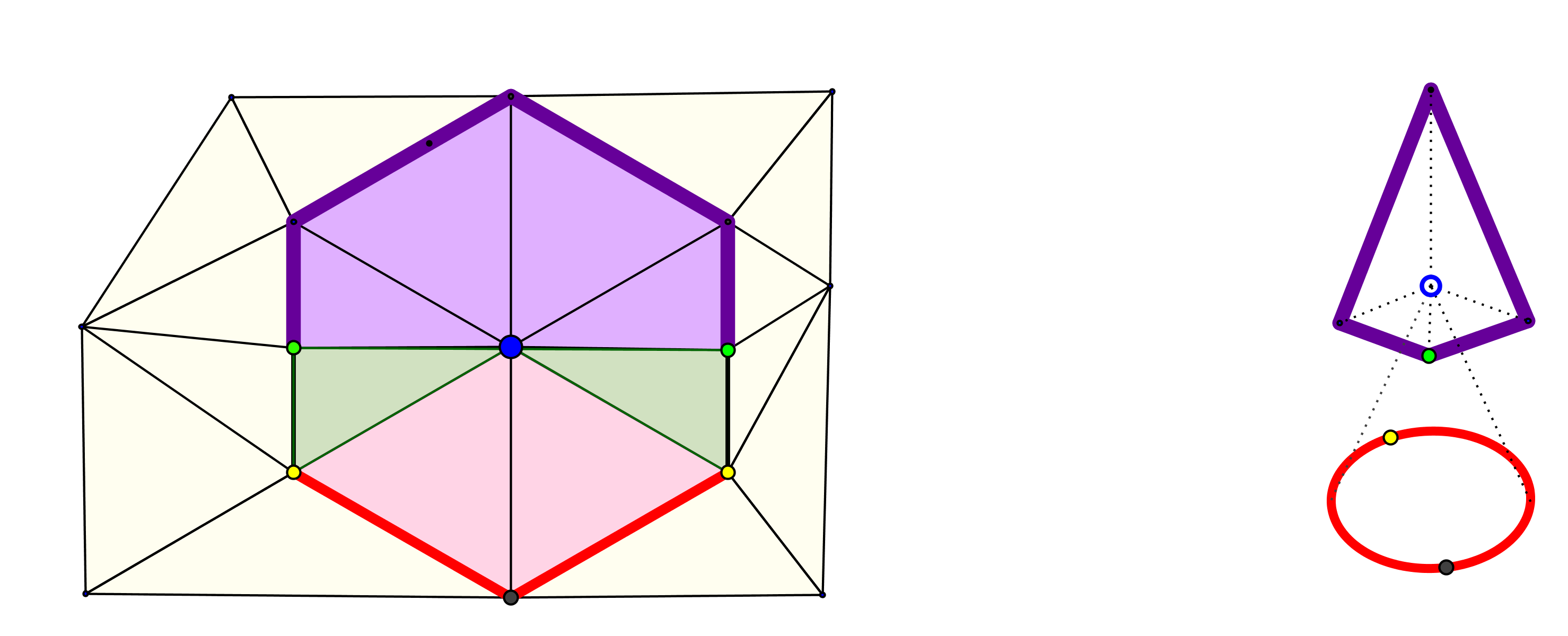} $\qquad \qquad \quad$
\includegraphics[scale=0.6]{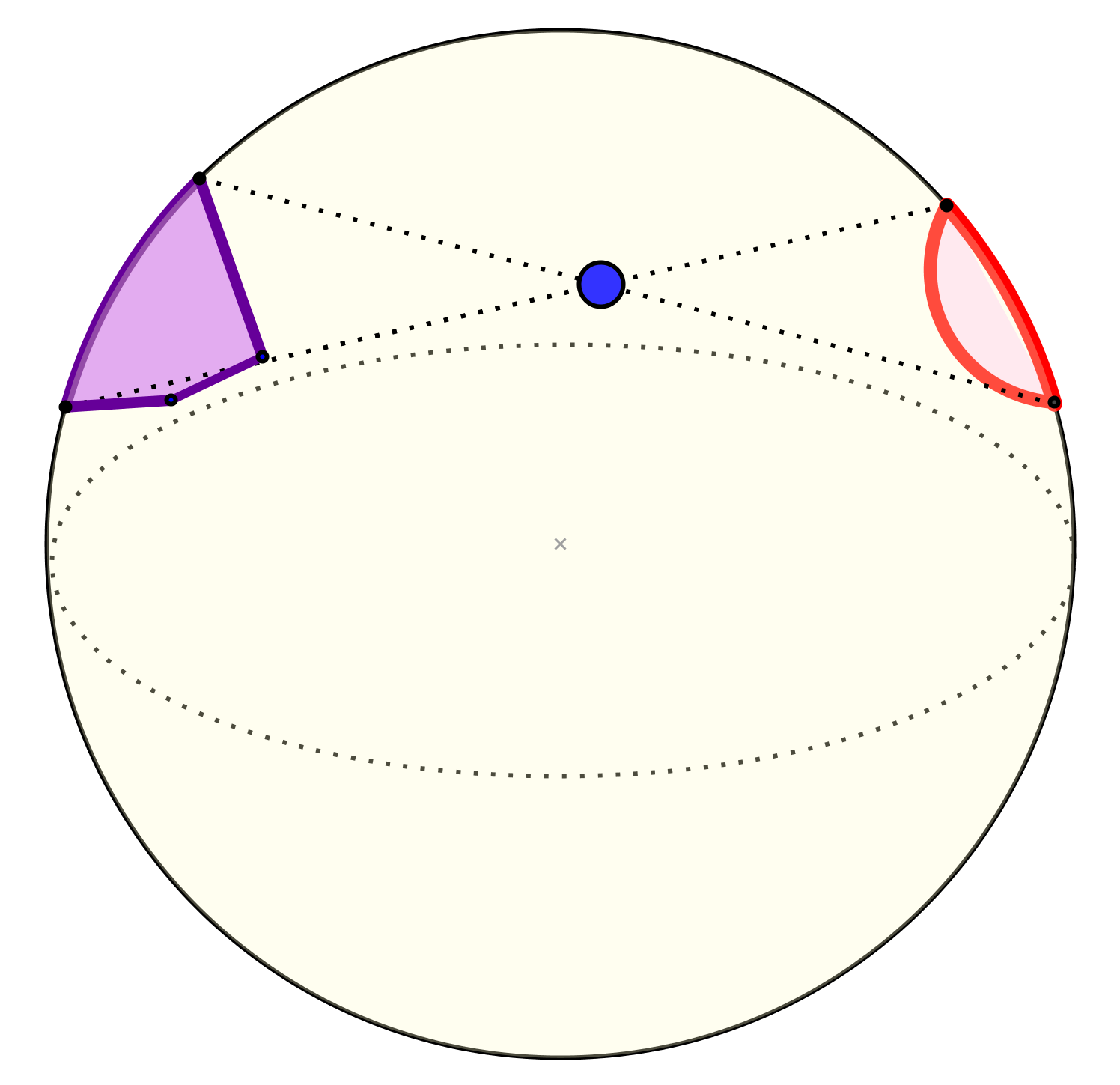}
\caption{\textsc{Left}: Part of the boundary of a $3$-ball $B$. The identification of the green triangles is a Mogami step. Once we perform it, the link inside $\partial B$ of the blue vertex ``splits'' (\textsc{center}): from a circle, it becomes two disjoint circles. Topologically, $\partial B$ gets ``pinched'' at the blue vertex (\textsc{right}).} 
\label{fig:VertexStar}
\end{center}
\end{figure}

\vskip-2mm

The green triangles are incident at $v$, but not adjacent. Their identification is a ``Mogami gluing'', but not an ``LC gluing'' (cf.~Definitions \ref{def:LCgluing}, \ref{def:Mogamigluing}). As depicted in Figure~\ref{fig:VertexStar}, the gluing changes the topology of the boundary: $v$  becomes  a {\mypink singularity}, in the sense that its link is disconnected. Also, after the gluing, we no longer have a simplicial complex, because the pink triangles now share 2 edges out of 3. We call such configuration of two boundary triangles sharing exactly 2 edges a {\mypink wound}.  Let us now perform a second identification, namely, let us glue the pink triangles together. As we do that, the topology changes back: The boundary-link of $v$ returns to be a single circle, as one of its connected components, the red digon, is sunk into the interior. The step of gluing together two boundary triangles with exactly 2 edges in common is called a {\mypink healing}. (The same step was called ``type-(iv) LC gluing'' in \cite[Definition 3.17]{BZ}.) The healing makes the wound disappear, as the triangle resulting from the identification is sunk into the interior. 

Now, let us start back from $B$ and let us perform the same two gluings in \emph{inverse order}: Pink first, then green. There are two pleasant novelties with this reshuffling:  
\begin{compactenum}[(1)]
\item When the pink triangles are glued, they share $1$ edge, not $2$. So the gluing is not a healing. 
\item When the green triangles are glued, they share $1$ edge, not just one vertex. As a result, the ``green gluing'' is now a legitimate LC gluing.
\end{compactenum}
By postponing the Mogami-non-LC move to after the healing move, topologically these two bizarre moves have `canceled out'; and we have obtained a sequence in which all triangles that we match have exactly $1$ edge in common at the moment of the gluing. The final complex is obviously the same ball as before.

Using this idea, we will prove that all Mogami $3$-ball without interior vertices are LC (Theorem~\ref{thm:main}). The trick is to systematically rearrange the Mogami sequence to obtain a sequence that is also LC. This does not work for all pseudomanifolds; but if we focus on Mogami constructions of $3$-balls without interior vertices, we know that the boundary-link of every vertex should eventually become a $1$-sphere. Hence, all the extra components of a boundary-link created by Mogami non-LC gluings have to be suppressed throughout the construction. Now, the only way to suppress a component is via a ``healing'' step. By reshuffling, we will obtain a new sequence where the non-LC step and the healing step `cancel out'.



\section*{Notation}
Throughout this paper,  $d$ is always an integer $\ge 2$.  
For the definitions of simplicial complex, regular CW complex, pure, shellable, cone..., we refer the reader to \cite{BZ}. 
Following \cite{BZ}, by {\mypink pseudomanifold} we mean a finite regular CW complex which is pure $d$-dimensional, simplicial, and such that every $(d-1)$-cell belongs to at most two $d$-cells. The {\mypink boundary} is the smallest subcomplex of the pseudomanifold containing all the $(d-1)$-cells that belong to exactly one $d$-cell. We call ``{\mypink $d$-ball}'' (resp.``{\mypink $d$-sphere}'' )  any simplicial complex homeomorphic to the unit ball in 
$\mathbb{R}^d$ (resp. to the unit sphere in $\mathbb{R}^{d+1}$). A {\mypink tree of $d$-simplices} is any $d$-ball whose dual graph is a tree.

\begin{definition}[\unite; \split]
Let $P_1, P_2$ be two disjoint $d$-pseudomanifolds, $d \ge 2$. The operation  {\mypink \unite} consists in identifying a $(d-1)$-face $\Delta'$ in  $\partial P_1$ with a $(d-1)$-face  $\Delta''$ in  $\partial P_2$. (For $d=3$, this was called ``step of type (i)'' in \cite[Definition 3.17]{BZ}.) If the $P_i$ do not have interior vertices,  neither does the obtained pseudomanifold $Q$; and if both $P_i$'s are $d$-balls, so is $Q$. Note also that $Q$ contains in its interior a $(d-1)$-face $\Delta$ with $\partial \Delta$ completely contained in $\partial Q$. 

The inverse operation is called  ``{\mypink split}''. (For $d=3$, this goes under the name of ``Cut-a-3-face'' in \cite[p.~267]{CEY} and of ``Operation (I)'' in \cite[p.~85]{HachiThesis}.)  It is defined whenever a pseudomanifold $Q$ has some  interior $(d-1)$-face $\Delta$ with  $\partial \Delta \subset \partial Q$. If $Q$ is simply-connected, the effect of {\split}  is to divide $Q$ (along the face $\Delta$) into two disconnected pseudomanifolds. In general, the effect of {\split} on the dual graph of the pseudomanifold is to delete one edge. 
\end{definition}

Trees of $N$ $d$-simplices are characterized as the $d$-complexes obtainable from $N$ disjoint $d$-simplices via exactly $N-1$ {\unite} steps.


\begin{definition}[\fold; \spread] \label{def:FoldSpread}
Let $P$ be a $d$-pseudomanifold, $d \ge 2$.
The operation {\mypink \fold} consists in identifying two boundary facets $\Delta', \Delta''$ that share {\em exactly one} $(d-2)$-face  $e$; compare Figure \ref{fig:spread}. (For $d=3$, the operation was called ``an LC step of type (ii)'' in \cite[Definition 3.17]{BZ}.)  If $P$ is a $d$-ball, then the obtained pseudomanifold $Q$ is homeomorphic to $P$. (This is false if $P$ is an arbitrary pseudomanifold, cf. Example \ref{ex:NonHom}.) Moreover, if $d \ge 3$ and $P$ does not have interior vertices,  neither does $Q$. The obtained pseudomanifold $Q$ contains in its interior a $(d-1)$-face $\Delta$ with exactly $d-1$ of its facets in $\partial Q$: in fact,  the only facet of $\Delta$ in the interior of $Q$ is the $(d-2)$-face $e$.

The inverse operation is called {\mypink \spread}; compare Figure \ref{fig:spread}. 
(For $d=3$, it goes under the name `Open-a-2-face'' in \cite[p.~267]{CEY} and ``Operation (II)'' in \cite[p.~85]{HachiThesis}). It is defined whenever a pseudomanifold $Q$ has some interior $(d-1)$-face $\Delta$ that has one of its $(d-2)$-faces in the interior of $Q$, and all its other $(d-2)$-faces in the boundary of $Q$.  
\end{definition}

\begin{figure}[htbp]
\begin{center}
\includegraphics[scale=0.31]{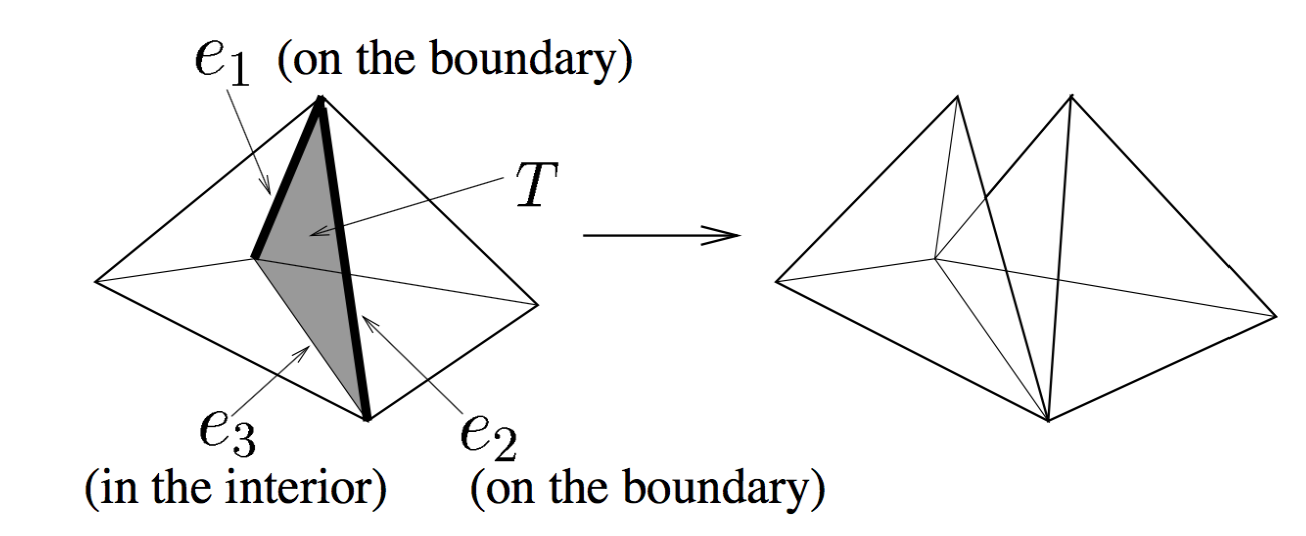} $\qquad \qquad \quad$
\caption{A \textsc{spread} operation for $d=3$. (Picture taken from \cite[p.~85]{HachiThesis}.) The inverse move -- namely, to identify two boundary triangles with exactly one edge in common --- is called \textsc{fold}.} 
\label{fig:spread}
\end{center}
\vskip-8mm
\end{figure}

When {\spread} is applied to a simplicial complex, it outputs a simplicial complex. In contrast, it is easy to see that {\fold} moves may lead out of the world of simplicial complexes.

Next, we introduce nuclei, which were called ``reduced balls'' in \cite[p.~85]{HachiThesis}:
 
\begin{definition}[Nucleus] 
Let $d\ge 2$. A {\mypink nucleus} is a $d$-ball where
\begin{compactenum}[(1)]
\item every $(d-3)$-face belongs to the boundary, and 
\item every interior $(d-1)$-face has at least $d-1$ of its $d$ facets in the interior of the ball. 
\end{compactenum}
 The $d$-simplex (for which condition (2) is void) is called the {\mypink trivial nucleus}.
 \end{definition}

The only $2$-dimensional nucleus is the trivial one. For $d \ge 3$, however, many non-trivial $d$-nuclei exist \cite{CEY}; for example, Hachimori's triangulation of Bing's thickened house with 2 rooms~\cite{HachiLibrary}. 

\begin{lemma}[Hachimori \cite{HachiThesis}, Collet--Eckmann--Younan \cite{CEY}]
Every $3$-ball without interior vertices can be reduced to a disjoint union of nuclei with some (greedy) sequence of {\split} and {\spread} moves. Without loss of generality, one can assume that all {\spread} steps are performed before the {\split} ones. \end{lemma}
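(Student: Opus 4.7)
My plan is to prove the statement in two steps: first exhibit a terminating greedy procedure that produces a disjoint union of nuclei, and then upgrade the argument so that all spreads precede all splits.

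\textbf{Step 1 (Greedy reduction.)} Starting from a $3$-ball $B$ without interior vertices, I would repeatedly apply any available {\spread} or {\split} move, in any order, until neither is possible. For termination I would use the number of interior $2$-faces as a monovariant. A {\spread} on an interior triangle $\Delta$ with one interior edge removes $\Delta$ from the interior and places two copies on the boundary; a {\split} along an interior triangle $\Delta$ with $\partial \Delta \subset \partial Q$ analogously removes $\Delta$ from the interior and places two copies on the new boundary. In both cases the number of interior $2$-faces strictly decreases by one, so the procedure halts in finitely many steps.

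At the halting point, every interior $2$-face must have at least two interior edges: if some interior $2$-face had $0$ interior edges (all three on $\partial Q$), a {\split} would still be available, and if some interior $2$-face had exactly one interior edge, a {\spread} would still be available. Neither operation creates interior vertices --- {\spread} preserves the endpoints of the opened edge as boundary vertices, and {\split} does not alter the vertex set --- so no interior vertex appears along the way. Each connected component of the final complex is therefore a $3$-ball satisfying both defining conditions of a nucleus.

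\textbf{Step 2 (Reordering spreads before splits.)} The key observation is that a {\split} cannot create new {\spread} opportunities. Precisely, suppose $Q$ has the property that no interior triangle has exactly one interior edge, and that we {\split} $Q$ along some $\Delta$ with $\partial \Delta \subset \partial Q$. I claim each resulting piece $Q_i$ inherits this property. Indeed, the split only alters the interior/boundary status of $\Delta$ itself, plus possibly that of edges in $\partial \Delta$; but $\partial \Delta \subset \partial Q$ already, so no edge changes its status. Consequently every interior triangle of $Q_i$ has the same number of interior and boundary edges as it had in $Q$. Given this, I would run the following variant of the algorithm: first apply {\spread} moves exhaustively until no interior triangle of the current complex has exactly one interior edge (termination follows from Step~1), and then apply {\split} moves exhaustively. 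By the observation above, the split phase never re-introduces a spread opportunity, so the overall procedure terminates in a disjoint union of nuclei with every spread performed before every split.

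\textbf{Anticipated obstacle.} The delicate point is the commutation observation in Step~2: one must verify that a {\split}, being purely local to the interior face $\Delta$, does not promote any edge of any other interior triangle from interior to boundary. This rests entirely on the defining hypothesis $\partial \Delta \subset \partial Q$, which forces the edges neighbouring $\Delta$ to be already boundary in $Q$. Once this is carefully checked, the rest of the argument is a routine bookkeeping exercise on the monovariant.
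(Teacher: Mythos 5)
Your argument is correct. Note that the paper does not actually prove this lemma -- it is quoted from Hachimori's thesis and from Collet--Eckmann--Younan -- so there is no in-paper proof to compare against; but your proof is essentially the standard one from those references. The two load-bearing points are both present and correctly justified: termination via the count of interior $2$-faces (each \split{} or \spread{} turns exactly one interior triangle into boundary and changes no other triangle's status), and the commutation observation that a \split{} along $\Delta$ with $\partial\Delta\subset\partial Q$ alters no edge's interior/boundary status and hence can never create a fresh \spread{} opportunity, which is exactly what licenses deferring all \split{} moves to the end. The only point I would ask you to make explicit is that each component of the final complex is genuinely a $3$-\emph{ball}: for \split{} this uses that $\Delta$ is a properly embedded disk in $Q$ (its interior lies in $\intx Q$ and $\partial\Delta\subset\partial Q$), so cutting along it yields two balls, and for \spread{} it uses that \spread{} is inverse to \fold, which preserves the homeomorphism type on balls (Definition \ref{def:FoldSpread}). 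With that sentence added, the proof is complete.
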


The next move can be viewed as a variation/expansion of {\fold}. 

\begin{definition}[LC gluing] \label{def:LCgluing}
Let $P$ be a $d$-pseudomanifold, $d \ge 2$. 
Identifying two boundary facets $\Delta', \Delta''$ whose intersection is \emph{$(d-2)$-dimensional} is an operation called an {\mypink LC gluing}.\end{definition}

Every fold is an LC gluing. The converse is false: for example, when $d=2$, gluing together two boundary edges that have \emph{both} endpoints in common is an LC gluing, but not a fold. The difference is topologically remarkable. It was proven in \cite{BZ} that the only manifolds obtainable from a tree of $d$-simplices with {\fold} moves, are $d$-balls. In contrast, with LC gluings one can obtain all polytopal $d$-spheres, for example. It was proven in \cite{B-Smoothing} that except when $d=4$, all simply-connected smooth $d$-manifolds (with or without boundary!) have a triangulation that can be obtained from some tree of simplices via LC gluings (cf.~Theorem \ref{thm:SimplyConnected}).  

Here is a further generalization, potentially leading to a broader gauge of complexes:

\begin{definition}[Mogami gluing] \label{def:Mogamigluing}
Let $P$ be a $d$-pseudomanifold, $d \ge 2$.
Identifying two boundary facets $\Delta', \Delta''$ whose intersection is \emph{nonempty} is an operation called a {\mypink Mogami gluing}.\end{definition}

Clearly, every LC gluing is a Mogami gluing, while the converse is false (unless $d=2$). We have arrived to the most important definition of the paper:

\begin{definition}[LC manifolds; Mogami manifolds]
Let $d\ge 2$. Let $M$ be a pure $d$-dimensional simplicial complex with $N$ facets that is also a pseudomanifold. $M$ is called 
 {\mypink LC} (resp.~{\mypink Mogami}) if it can be obtained from a tree of $N$ $d$-simplices via some sequence, possibly empty, of LC gluings (resp.~of Mogami gluings). We refer to the sequence as ``the LC construction'' (respectively, ``the Mogami construction''). With abuse of notation, the intermediate pseudomanifolds in the LC construction of an LC manifold are also called ``LC pseudomanifolds''; same for Mogami.
 \end{definition}
 
\begin{remark} \label{rem:Characterization}
The original definition of \cite{Mog}, given only for $d=3$, was slightly different. Mogami considered a class $\mathfrak{C}$ of $3$-pseudomanifolds obtained from a tree of tetrahedra by performing either (1) LC gluings, or (2) identifications of incident boundary edges, subject to a certain planarity condition. 

Now, identifying $2$ boundary edges that share a vertex $v$ creates new adjacencies between triangles that before were only incident at $v$. So it is clear that Mogami $3$-pseudomanifolds (with our definition) all belong to the class $\mathfrak{C}$, since we could realize any Mogami gluing as a ``combo'' of an identification of adjacent boundary edges followed by an LC gluing. 

Conversely, we claim that all manifolds in $\mathfrak{C}$ are Mogami. (This is false for pseudomanifolds.) 
In fact, if we identify two boundary edges that share a vertex $v$ in the boundary of an arbitrary pseudomanifold, we create an entire ``singular edge''. To get a manifold, we have to get rid of this singular edge; the only way to do so is by identifying two triangles $\Delta', \Delta''$ containing that edge, at some point in the Mogami construction. But then we can rearrange the sequence of gluings by performing the Mogami gluing $\Delta' \equiv \Delta''$ before all other gluings.
 \end{remark}

\section{General Aspects of Mogami Complexes}
Let us start with a topological motivation to study the Mogami class. 

\begin{proposition} \label{prop:SimplyConnected} Every Mogami $d$-pseudomanifold is simply-connected.
\end{proposition}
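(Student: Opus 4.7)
The plan is to argue by induction on the number $k$ of Mogami gluings used to produce the pseudomanifold from its underlying tree of $d$-simplices. The base case $k=0$ is immediate: a tree of $d$-simplices is by definition a $d$-ball, hence simply-connected. For the inductive step, let $P$ be a simply-connected Mogami pseudomanifold, and let $P'$ be obtained from $P$ by a single Mogami gluing that identifies two boundary facets $\Delta'$ and $\Delta''$ via a simplicial isomorphism $f\colon \Delta'\to\Delta''$. By the Mogami condition, $\Delta'\cap\Delta''$ is a nonempty common simplicial face, so $\Delta'\cup\Delta''\subset P$ is path-connected (in fact contractible, as the union of two simplices meeting along a nonempty subsimplex).

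To compute $\pi_1(P')$, I would replace $P'$ with a homotopy-equivalent pushout better suited to van Kampen. Let $M_f$ denote the mapping cylinder of $f$, which is contractible and contains canonical copies of $\Delta'$ and $\Delta''$ at its two ends. Form
\[
\tilde P \;:=\; P \;\cup_{\Delta'\sqcup\Delta''}\; M_f
\]
by gluing these end-copies to the corresponding faces of $P$ via the identity. Collapsing the cylinder factor of $M_f$ inside $\tilde P$ recovers the identification $\Delta'\sim_f\Delta''$, so $\tilde P\simeq P'$, and it suffices to show $\pi_1(\tilde P)=0$. Apply van Kampen with $U$ an open regular neighborhood of $M_f$ in $\tilde P$ and $V$ an open regular neighborhood of $P$; then $U$ deformation retracts onto $M_f$, $V$ deformation retracts onto $P$, and $U\cap V$ deformation retracts onto $P\cap M_f = \Delta'\cup\Delta''$. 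Since $\pi_1(M_f)=0$ by contractibility, $\pi_1(P)=0$ by the inductive hypothesis, and $U\cap V$ is path-connected, van Kampen yields
\[
\pi_1(\tilde P) \;=\; \pi_1(M_f)\,*_{\pi_1(\Delta'\cup\Delta'')}\,\pi_1(P) \;=\; 0\,*_{\pi_1(\Delta'\cup\Delta'')}\, 0 \;=\; 0,
\]
completing the induction.

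The main technical point to address carefully will be the existence of the regular neighborhoods $U,V$ with the stated deformation-retraction properties; for inclusions of subcomplexes into a simplicial complex this is standard and can be arranged via stars in a barycentric subdivision, so it should not introduce surprises. The Mogami hypothesis is used at exactly one step, namely to guarantee $\Delta'\cap\Delta''\ne\emptyset$ and hence path-connectivity of $\Delta'\cup\Delta''$, which is precisely the condition that forces the amalgamated free product in van Kampen to collapse. Without this hypothesis, $\Delta'\cup\Delta''$ would have two components, and the identification would typically add a handle contributing a new $\mathbb{Z}$ generator to $\pi_1(P')$, matching the geometric expectation that identifying two disjoint $(d-1)$-simplices creates a nontrivial loop.
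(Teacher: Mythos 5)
Your argument is correct, and it shares the paper's overall skeleton (induction on the number of Mogami gluings, with the hypothesis $\Delta'\cap\Delta''\neq\emptyset$ doing all the work), but the inductive step is carried out by genuinely different means. The paper argues directly and geometrically: if a new loop appears after the gluing, it arises from identifying two endpoints $x'\in\Delta'$, $x''\in\Delta''$ of a path through the interior of $P$, and homotoping both endpoints to a common vertex $v\in\Delta'\cap\Delta''$ exhibits the new loop as homotopic to a loop already present in $P$ --- in effect a hands-on proof that $\pi_1(P)\to\pi_1(P')$ is surjective. You instead package the same surjectivity via van Kampen applied to the mapping-cylinder (homotopy pushout) model of the identification, which buys rigor: the cofibration formalism cleanly handles the fact that $P'$ is a quotient rather than a union of open sets, whereas the paper's homotopy is left informal. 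The cost is the extra machinery ($\tilde P\simeq P'$ via the cofibration property of subcomplex inclusions, plus the regular-neighborhood bookkeeping you flag). Two small remarks: your parenthetical claim that $\Delta'\cup\Delta''$ is contractible can fail in the intermediate (non-simplicial) pseudomanifolds of a Mogami construction, where two boundary facets may intersect in a disconnected union of faces; but since $\pi_1(U)=\pi_1(V)=1$, the amalgamated product is trivial for \emph{any} $\pi_1(U\cap V)$, so only path-connectedness of $\Delta'\cup\Delta''$ is needed, and that follows from nonemptiness of the intersection alone. Your closing observation about where the hypothesis enters matches the paper's intent exactly.
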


\begin{proof}
By induction on the number of Mogami gluings. Any tree of simplices is topologically a ball, hence simply connected. Consider now the moment in which we glue together two incident boundary facets $\Delta'$ and $\Delta''$ of a simply-connected $d$-pseudomanifold $P$; and suppose a new loop arises. This means that we have just identified two endpoints $x' \in \Delta'$ and $x'' \in \Delta''$ of a path whose relative interior lies completely in the interior of $P$. Let $v$ be a vertex in $\Delta' \cap \Delta''$. By homotoping both $x'$ and $x''$ to $v$, one sees that the ``new loop'' is actually homotopy equivalent to an ``old loop'' already contained in $P$ (hence homotopically trivial, by induction.)
\end{proof}

Not all triangulations of simply-connected manifolds are Mogami, as we will prove in Theorem \ref{thm:final}. However, a partial converse to Proposition \ref{prop:SimplyConnected}  can be derived from \cite{B-Smoothing}:

\begin{theorem}[Benedetti \cite{B-Smoothing}] \label{thm:SimplyConnected} For $d\ne4$, any PL triangulation of any simply-connected $d$-manifold (with boundary) becomes an LC triangulation after performing a suitable number of consecutive barycentric subdivisions.
\\ In particular, every simply-connected smooth $d$-manifold $d \ne 4$, admits a Mogami triangulation. 
\end{theorem}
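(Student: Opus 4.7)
The plan is to reduce the LC property to a combinatorial collapsibility condition, then invoke a subdivision-theoretic result to guarantee that collapsibility for simply-connected PL manifolds, and finally bridge from smooth to PL in the dimensions where this is legitimate.

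\emph{Step 1: characterize LC by collapsibility.} Extending \cite[Cor.~2.11]{BZ} from $d=3$ to arbitrary $d$, I would show that a $d$-manifold $M$ with boundary is LC if and only if, after removing a single $d$-simplex $\sigma$, the complex $M\setminus\{\sigma\}$ collapses (in the Whitehead sense) onto a subcomplex of $\partial M$. The correspondence is elementary: each LC gluing of two boundary facets sharing a $(d-2)$-face corresponds, time-reversed, to an elementary simplicial collapse of the interior $(d-1)$-face freshly created by the gluing. For closed manifolds, one starts the construction by deleting a designated facet. The argument in arbitrary dimension is formally identical to the one sketched for $d=3$ in \cite{BZ}.

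\emph{Step 2: collapsibility after subdivision.} For a simply-connected PL $d$-manifold $M$, Whitehead's regular neighborhood theorem produces a PL collapse of $M$ (minus a facet, if $M$ is closed) onto a spine of strictly lower dimension; simple-connectedness is what trivializes the Wall-type obstruction to iterated collapsing. Then an Adiprasito--Benedetti-style promotion result guarantees that such a PL collapse becomes a \emph{simplicial} collapse after sufficiently many iterated barycentric subdivisions. Combining with Step~1, $\sd^{k} M$ is LC for $k$ large.

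\emph{Step 3: smooth to PL.} For $d\neq 4$, every smooth $d$-manifold admits a PL triangulation (Rad\'o--Moise for $d\leq 3$; smoothing theory via Munkres--Whitehead for $d\geq 5$). The case $d=4$ must be excluded because by Casson--Freedman there exist smooth $4$-manifolds without any PL structure. Applying Steps~1--2 to a PL triangulation, and then using the trivial inclusion $\{\textrm{LC}\}\subseteq\{\textrm{Mogami}\}$, yields the ``in particular'' assertion.

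The hardest step is Step~2: translating the purely topological condition ``simply-connected'' into an explicit simplicial collapse compatible with the rigid format of an LC construction, and quantitatively controlling the number of barycentric subdivisions needed as a function of the topology of $M$. The subtlety is that PL collapses only respect the PL structure, not the specific triangulation; converting one into the other is precisely what subdivisions are for, and this conversion is where the bulk of the work in \cite{B-Smoothing} is spent.
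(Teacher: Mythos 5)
First, a point of order: the paper does not prove Theorem \ref{thm:SimplyConnected} at all --- it is imported verbatim from \cite{B-Smoothing} --- so there is no internal proof to compare your sketch against. Judged on its own terms, your three-step architecture (characterize LC by a collapsibility/discrete Morse condition; use the topology of simply-connected PL manifolds to produce the required PL collapse; promote PL collapses to simplicial ones after iterated barycentric subdivision) is indeed the skeleton of the argument in \cite{B-Smoothing}, which however works with handle decompositions and boundary-critical discrete Morse functions in the sense of \cite{Benedetti-DMT4MWB}, rather than with collapses onto $\partial M$ directly.

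That said, two of your steps contain genuine errors. In Step 3 you justify excluding $d=4$ by claiming that some smooth $4$-manifolds admit no PL structure; this is false. By Whitehead's triangulation theorem every smooth manifold, in every dimension, carries a compatible PL triangulation (it is \emph{topological} $4$-manifolds, such as the $E_8$-manifold, that may fail to be PL). The dimension restriction actually lives in your Step 2, which is exactly the step you have left vague: a codimension-one spine exists for \emph{any} compact manifold with nonempty boundary, independently of $\pi_1$, and buys nothing. What simple connectivity supplies --- via Smale--Wall handle cancellation, valid for $d\ge 5$ and trivial for $d\le 3$ --- is a handle decomposition with a single $0$-handle and no $1$-handles, equivalently (dually, after removing a facet) a spine of codimension at least $2$; whether a simply connected $4$-manifold admits such a decomposition is a well-known open problem, and this is why $d=4$ is excluded. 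Your phrase ``simple-connectedness trivializes the Wall-type obstruction'' names the right circle of ideas but conceals the fact that this is precisely where both hypotheses ($\pi_1=1$ and $d\ne 4$) are consumed. Finally, in Step 1 the asserted equivalence ``LC $\Leftrightarrow$ $M$ minus a facet collapses onto a subcomplex of $\partial M$'' is not the form established in \cite{BZ} for general $d$ (for spheres the target must be a complex of dimension at most $d-2$; for manifolds with boundary the correct general statement is the $K^T$/boundary-critical formulation of \cite{Benedetti-DMT4MWB}), and the direction you actually need --- from a collapse to an LC construction --- is the harder one, so ``formally identical to $d=3$'' understates the work required.
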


Recall that a simplicial complex is called {\mypink strongly-connected} if it pure (i.e.~all facets have the same dimension) and its dual graph is connected. By induction on the number of Mogami steps, one can easily prove:


\begin{proposition} \label{prop:StronglyConnected} Every Mogami $d$-pseudomanifold is strongly-connected, and all vertex links in it are strongly-connected as well.
\end{proposition}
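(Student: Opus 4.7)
The plan is to prove both assertions simultaneously by induction on the number $k$ of Mogami gluings used to build $P$. For the base case $k=0$, $P$ is a tree of $d$-simplices and hence a $d$-ball by definition. Its dual graph is a tree, so $P$ is strongly-connected. Every vertex link is a $(d-1)$-sphere (at an interior vertex) or a $(d-1)$-ball (at a boundary vertex), and connected triangulated manifolds are strongly-connected as pseudomanifolds.

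For the inductive step, suppose $P$ arises from a Mogami pseudomanifold $P'$, which satisfies both properties, via a single Mogami gluing identifying boundary facets $\Delta'\subset\sigma'$ and $\Delta''\subset\sigma''$ through a bijection $\phi\colon \Delta'\to\Delta''$ with $\Delta'\cap\Delta''\neq\emptyset$. The dual graph of $P$ is obtained from that of $P'$ by adding exactly one edge, namely the one linking $\sigma'$ and $\sigma''$ along the identified facet; adding an edge to a connected graph keeps it connected, so $P$ is strongly-connected.

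For the link claim, I would fix a vertex $v$ of $P$ and distinguish three cases according to how $v$ lifts to $P'$. (i)~If the preimage is a single vertex $v_0\notin \Delta'\cup\Delta''$, then $\link(v,P)=\link(v_0,P')$ and strong-connectedness is inherited. (ii)~If the preimage is a single $v_0\in\Delta'\cap\Delta''$ fixed by $\phi$, then $\link(v,P)$ is obtained from $\link(v_0,P')$ by identifying the two boundary $(d-2)$-faces $\Delta'\setminus v_0$ and $\Delta''\setminus v_0$; this merely adds the edge $\{\sigma'\setminus v_0,\ \sigma''\setminus v_0\}$ to the dual graph of the link, which remains connected. (iii)~If the preimage consists of two distinct vertices $v_0\in\Delta'$ and $v_1=\phi(v_0)\in\Delta''$, then the dual graph of $\link(v,P)$ contains, as a spanning subgraph, both dual graphs of $\link(v_0,P')$ and $\link(v_1,P')$ (each connected by the inductive hypothesis), together with at least the new edge between $\sigma'\setminus v_0$ and $\sigma''\setminus v_1$ produced by the identification $\Delta'\setminus v_0\equiv \Delta''\setminus v_1$; hence the whole dual graph of $\link(v,P)$ is connected.

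The main subtlety I anticipate is in case (iii), particularly when $v_0$ and $v_1$ already span an edge of $P'$, since the identification can then create non-simplicial artifacts at $v$ and complicate a naive description of the link. I would avoid these complications by phrasing the argument entirely in terms of the dual graph of top-dimensional cells, which is all that strong-connectedness sees, and by observing that any preexisting overlap between $\link(v_0,P')$ and $\link(v_1,P')$ only adds further adjacencies in $\link(v,P)$ and therefore can only help connectivity rather than hurt it.
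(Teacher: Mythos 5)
Your proof is correct and follows exactly the route the paper indicates: the paper gives no written proof, stating only that the result follows ``by induction on the number of Mogami steps,'' and your induction on the number of gluings, with the dual-graph bookkeeping for the vertex links (including the observation that identifications can only add adjacencies), is a faithful and complete filling-in of that sketch.
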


The converse does not hold: any triangulation of an annulus is strongly-connected and has strongly-connected links, but it cannot be Mogami by Proposition~\ref{prop:SimplyConnected}.

For $2$-dimensional pseudomanifolds, the LC property and the Mogami property are equivalent, because two boundary edges are adjacent if and only if they are incident. We show next that the two properties diverge from dimension $3$ on. 

In \cite[Lemma 2.23]{BZ} it is shown that {\em the union of two LC pseudomanifolds with a codimension-one strongly-connected intersection, is LC}. Interestingly, an analogous result holds for the Mogami property, basically up to replacing ``strongly-connected'' with ``connected'': 


\begin{proposition} \label{prp:MogamiConstruction}
Let $A, B, C$ be three $d$-pseudomanifolds such that $A \cup B = C$. Assume $A \cap B$ is pure $(d-1)$-dimensional and connected. If $A$ and $B$ are both Mogami, so is~$C$.
\end{proposition}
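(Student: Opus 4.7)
The plan is to build a Mogami construction of $C$ by stitching together Mogami constructions of $A$ and $B$. Fix Mogami constructions of $A$ from a tree $T_A$ of $N_A$ simplices, and of $B$ from a tree $T_B$ of $N_B$ simplices. Every facet $F$ of $A \cap B$ lies in the boundaries of both $A$ and $B$, and hence corresponds to a unique boundary facet $F_A$ of $T_A$ (namely the one that survives $A$'s construction as $F$) and similarly to a boundary facet $F_B$ of $T_B$. Pick any facet $F^1$ of $A \cap B$ and form a tree $T$ of $N_A + N_B$ simplices by taking $T_A \sqcup T_B$ and performing one \unite step that identifies $F^1_A$ with $F^1_B$; the dual graph of $T$ is the disjoint union of the dual trees of $T_A$ and $T_B$ joined by a single edge, so $T$ is indeed a tree of simplices.

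Starting from $T$, first replay all Mogami gluings of $A$'s construction (these operate entirely on the $T_A$-portion and never touch $F^1_A$), and then all Mogami gluings of $B$'s construction. Each such gluing remains a Mogami gluing in the enlarged complex, because the relevant intermediate complex from $A$'s (resp.~$B$'s) original construction sits as a subcomplex, so the two boundary facets to be identified still share the required vertex. After these moves, the current complex is $A \cup B$ glued only along the single face $F^1$.

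To finish, enumerate the remaining facets of $A \cap B$ as $F^2, F^3, \ldots, F^k$ so that each $F^i$ shares at least one vertex with $F^1 \cup \cdots \cup F^{i-1}$; such an ordering exists because $A \cap B$ is a connected pure $(d-1)$-complex, and connectedness means any facet is linked to any previous one by a chain of facets sharing common faces (hence vertices). For each $i \ge 2$, identify $F^i_A$ with $F^i_B$; the final outcome is $C$. To see that each of these identifications is a Mogami gluing, take a vertex $v$ shared by $F^i$ and some earlier $F^j$: since the pair $F^j_A \equiv F^j_B$ has already been fused in an earlier step, the two copies of $v$ coincide in the current complex, and therefore $v \in F^i_A \cap F^i_B$.

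The main subtlety is to justify that every identification along the way is genuinely a Mogami gluing in the current intermediate complex, i.e.~that the two facets to be identified share at least one vertex at that moment. For the steps inherited from $A$'s and $B$'s original constructions this is automatic, because the original intermediate complex embeds as a subcomplex of the current one. For the identifications of the $F^i$ facets of $A \cap B$ it requires the vertex-propagation argument above; this is precisely where connectedness of $A \cap B$ (rather than strong connectedness, as in the LC analog \cite[Lemma 2.23]{BZ}) is used.
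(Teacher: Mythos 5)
Your proposal is correct and follows essentially the same route as the paper's own proof: join $T_A$ and $T_B$ by a single \unite{} at one facet of $A\cap B$, replay the two Mogami constructions, and then glue the remaining pairs of copies of facets of $A\cap B$ in an order where each new facet is incident to an earlier one, which is exactly where connectedness (rather than strong connectedness) enters. Your added justification that the replayed gluings and the vertex-propagation steps remain legitimate Mogami gluings is a correct elaboration of what the paper leaves implicit.
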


\begin{proof} First of all, we observe that $A \cap B$ is contained in both $\partial A$ and $\partial B$. In fact, since $A \cup B$ is a pseudomanifold, every $(d-1)$-face of $A \cap B$ can be contained in at most two $d$-faces of $A \cup B$, so it has to be contained in exactly one $d$-face of $A$ and in exactly one $d$-face of $B$.

Since  $A \cap B$ is connected, we can find a total order $F_0, \ldots, F_s$ of the facets of $A \cap B$ such that
for each $i \ge 1$, $F_i$ is incident to some $F_j$, with  $j<i$.
Let us fix  a Mogami construction for $A$ and one for $B$. Let $T_A$ (resp.\ $T_B$) be the tree of $d$-simplices from which $A$ (resp.\ $B$) is obtained. If we perform a  {\unite} move and join $T_A$ and $T_B$ ``at $F_0$'', we obtain a unique tree of tetrahedra $T_C$ containing all facets of $C$. Each $F_i$ ($i \ge 1$) corresponds to two distinct $(d-1)$-faces in the boundary of $T_C$, one belonging to  $T_A$ and one to $T_B$; we will call these two faces ``the two copies of $F_i$''. Now $C$ admits a Mogami construction starting from $T_C$, as follows:
\begin{compactenum}[(a)]
\item first we perform all identifications of boundary facets of $T_C$ that belonged to $T_A$, exactly as prescribed in the chosen Mogami construction of $A$ from $T_A$;
\item then we perform the identifications given by the Mogami construction of $B$;
\item finally, for each $i \ge 1$ (and in the same order!), we glue together the two copies of $F_i$. \end{compactenum}
Since each $F_i$ is incident to some $F_j$, with $j<i$,  the gluings of phase~(c) are Mogami gluings. \qedhere

\end{proof}

\begin{corollary} \label{cor:MogamiNotLC}
Some $3$-dimensional pseudomanifolds are Mogami, but not LC.
\end{corollary}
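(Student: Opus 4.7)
The plan is to use Proposition \ref{prp:MogamiConstruction} to produce a Mogami $3$-pseudomanifold that evades the corresponding LC union lemma \cite[Lemma 2.23]{BZ}, which requires the intersection of two LC pieces to be strongly-connected (not merely connected). Concretely, I would take $A$ and $B$ to be two triangulated $3$-balls, for instance two triangular bipyramids, each possessing a pair of boundary $2$-simplices meeting in a single equator vertex. Identify these two pairs between $A$ and $B$ via a vertex-preserving bijection, so that $A\cap B$ becomes two $2$-simplices sharing exactly one vertex. This intersection is topologically connected through the shared vertex, but its dual graph has two isolated nodes (one per triangle, with no edge since the triangles share no edge), hence it is not strongly-connected. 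Proposition \ref{prp:MogamiConstruction} then yields that the pseudomanifold $C = A\cup B$ is Mogami.

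For the non-LC direction, the cleanest approach is to appeal to the forthcoming Proposition \ref{prp:Cone}, which characterizes Mogami cones as those over strongly-connected bases. Applying this to the cone $v\ast K$ over a triangulated closed orientable surface $K$ of positive genus (strongly-connected but not simply-connected) produces a Mogami $3$-pseudomanifold. To see $v\ast K$ is not LC, note that an LC structure on the cone would restrict to an LC structure on the link of $v$, namely $K$ itself. But in dimension $2$ the LC and Mogami notions coincide (sharing an edge is the same as sharing a $(d-2)$-face when $d=2$), so $K$ would be Mogami, hence simply-connected by Proposition \ref{prop:SimplyConnected}, contradicting $\pi_1(K)\neq 1$.

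The main obstacle is the non-LC direction. Both Mogami and LC pseudomanifolds satisfy strong structural properties (simple-connectivity and strongly-connected links), so merely observing that Proposition \ref{prp:MogamiConstruction} is broader than its LC counterpart does not by itself rule out some alternative LC construction of the bipyramid gluing $C$. The robust argument is the link-based obstruction in the cone case, which will be formalized in the proof of Proposition \ref{prp:Cone}.
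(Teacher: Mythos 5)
Your argument is essentially correct, but it lands on the paper's \emph{secondary} route rather than its primary one, and your first example is (as you yourself note) left unfinished. The paper's own proof also starts from Proposition \ref{prp:MogamiConstruction}: it glues two shellable $4$-tetrahedron cones along exactly the configuration you describe (two triangles sharing one vertex, connected but not strongly-connected), and then closes the non-LC direction with a topological obstruction you did not invoke, namely Durhuus--J\'onsson's theorem that every strongly-connected component of the boundary of an LC $3$-pseudomanifold is a $2$-sphere and that two such components meet in at most one point; the glued example has a pinched sphere as boundary, so it cannot be LC. That is precisely the missing ingredient that would rescue your bipyramid example, and it is worth knowing since it applies to gluings that are not cones. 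Your complete argument --- the cone $v\ast K$ over a positive-genus closed surface, Mogami by Proposition \ref{prp:Cone}, not LC because $K$ would have to be LC and hence simply-connected --- coincides in spirit with the paper's parenthetical alternative (the paper uses a cone over an annulus) and is valid; the one point to tighten is the attribution of the step ``LC for the cone implies LC for the base'': this is not what Proposition \ref{prp:Cone} proves (that proposition only handles the Mogami direction), but rather the cited result \cite[Proposition~3.25]{BZ}, which the paper quotes explicitly and which you may use as a black box. With that citation in place your cone argument is a complete and correct proof of the corollary.
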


\begin{proof}
Let $C_1$ and $C_2$ be two shellable simplicial $3$-balls consisting of $4$ tetrahedra, as indicated in Figure \ref{fig:Gluing}. (The $3$-balls are cones over the subdivided squares on their front.) 
\begin{figure}[htbp]
\begin{center}
\includegraphics[scale=0.58]{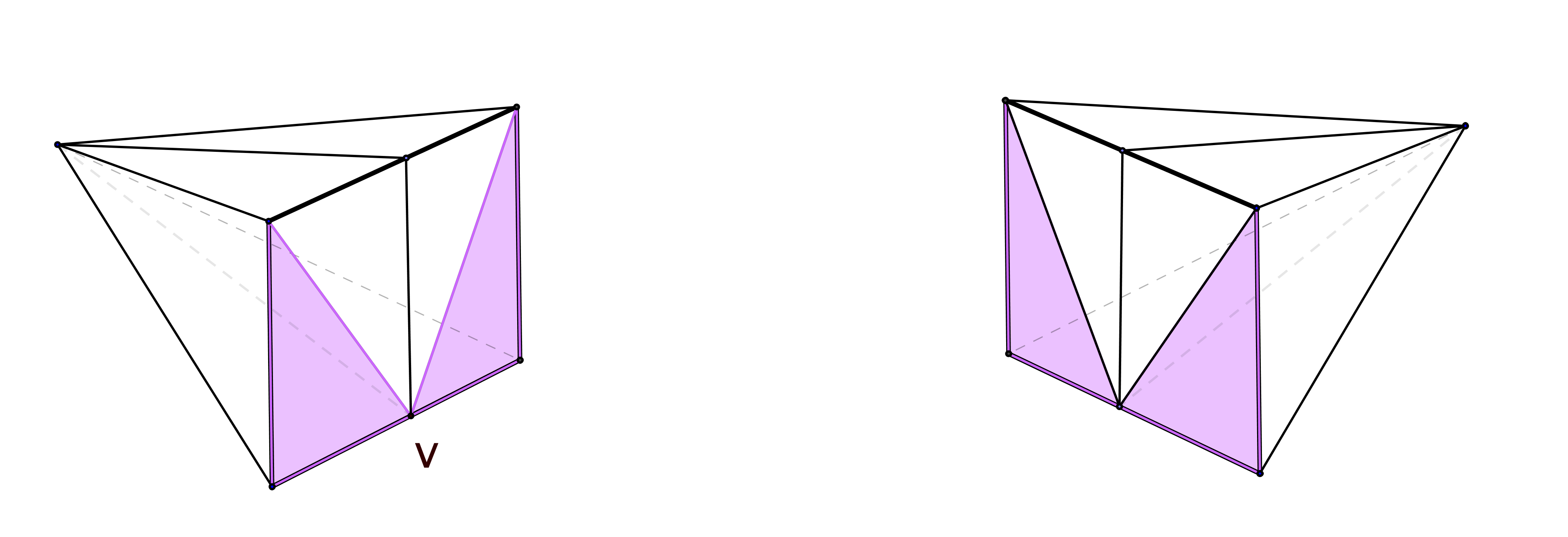}
\caption{Gluing the simplicial $3$-balls along the shaded $2$-dimensional subcomplex (which is connected, but not strongly-connected) gives a Mogami $3$-pseudomanifold that is not LC. Note that the resulting $3$-pseudomanifold is a cone over an annulus (the vertex $v$ is the apex of the cone).}
\label{fig:Gluing}
\end{center} \vskip-1mm
\end{figure}
Since shellable implies LC \cite{BZ} and LC implies Mogami, both $C_1$ and $C_2$ are Mogami. Glue them together in the shaded subcomplex in their boundary  (which uses $5$ vertices and $2$ triangles.) Note that such subcomplex is connected, but not strongly-connected. Let $P$ be the resulting $3$-dimensional pseudomanifold. By Proposition \ref{prp:MogamiConstruction} the pseudomanifold $P$ is Mogami. It remains to prove that $P$ cannot be LC. For this we use a topological result by Durhuus and J\'{o}nsson \cite{DJ}: If $L$ is any LC $3$-dimensional pseudomanifold, then 
any strongly-connected component of $\partial L$ is a $2$-sphere; in addition, any two strongly-connected components of $\partial L$ intersect in at most one point. Yet our $\partial P$ has a different topology: It is a ``pinched sphere'', i.e. the space obtained by identifying two antipodal points of a $2$-sphere. Hence, $P$ cannot be LC. (Alternatively, one can also observe that $P$ is a cone over an annulus; an annulus is not simply connected and therefore not LC; via \cite[Proposition 3.25]{BZ}, this implies that $P$ cannot be LC either.)
\end{proof}

We have arrived to another crucial difference between the LC and the Mogami notion, namely, the behavior with respect to taking cones. In \cite[Proposition 3.25]{BZ} it is proven that {\em for any pseudomanifold $P$ and for any vertex $v$ not in $P$, the cone $v \ast P$ is LC if and only if $P$ is LC}. It turns out that cones tend to be Mogami more often.

\begin{proposition} \label{prp:Cone}
Let $A$ be a $d$-pseudomanifold. Let $v$ be a new point. The cone $v \ast A$ is Mogami if and only if $A$ is strongly-connected. 
\end{proposition}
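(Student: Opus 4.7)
The plan is to handle the two implications separately, and the key conceptual insight is that coning on $v$ gives every boundary $d$-face of the intermediate pseudomanifolds a common vertex $v$, which is a ``free'' witness of non-empty intersection.

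The forward direction ($\Rightarrow$) is immediate from Proposition \ref{prop:StronglyConnected}. Since $v \notin A$, the link of $v$ in the simplicial cone $v \ast A$ equals $A$ itself. If $v \ast A$ is Mogami, all its vertex links must be strongly-connected, so $A$ is strongly-connected.

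For the converse ($\Leftarrow$), I would exhibit an explicit Mogami construction of $v \ast A$. Since $A$ is strongly-connected, choose a spanning tree in its dual graph; this orders the facets as $F_0, F_1, \ldots, F_s$ and singles out $s$ interior $(d-1)$-faces $G_1, \ldots, G_s$ of $A$ with $G_i = F_i \cap F_{\pi(i)}$ for some $\pi(i) < i$. Starting from $s+1$ disjoint $(d+1)$-simplices regarded abstractly as the cones $v \ast F_i$, perform $s$ consecutive \unite\ operations that identify, for each $i$, the two copies of $v \ast G_i$ across the tree. The outcome is a tree $v \ast T$ of $(d+1)$-simplices, where $T$ is the tree of $d$-simplices obtained from $A$ by keeping only the spanning-tree identifications.

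It remains to turn $v \ast T$ into $v \ast A$ using Mogami gluings. For each interior $(d-1)$-face $G$ of $A$ not in the chosen spanning tree, the two boundary $d$-faces of $v \ast T$ labelled by $v \ast G$ must be glued together. Each such gluing is Mogami because both $d$-faces contain the apex $v$, so their intersection contains at least one vertex. Performing all these gluings in any order produces $v \ast A$. The main — and only mildly technical — point to verify is that, at the moment of each pending gluing, the two $d$-faces are still distinct boundary cells of the current intermediate pseudomanifold: this holds because $G$ lies off the spanning tree, so the two copies of $G$ are genuinely distinct boundary $(d-1)$-faces of $T$ (otherwise $G$ would itself be a tree edge), and cells in a regular CW pseudomanifold are never collapsed by gluings that do not explicitly pair them.
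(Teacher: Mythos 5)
Your proposal is correct and follows essentially the same route as the paper: the forward direction via Proposition \ref{prop:StronglyConnected} applied to the link of $v$, and the converse by coning the tree of $d$-simplices determined by a spanning tree of the dual graph of $A$ and observing that every remaining identification pairs two boundary facets containing the apex $v$, hence is a Mogami gluing. The only difference is cosmetic: you build $v \ast T$ explicitly by \unite{} moves, where the paper simply notes that $v \ast T_N$ is a tree of $(d+1)$-simplices.
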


\begin{proof} The ``only if'' part follows from Proposition \ref{prop:StronglyConnected}, since the link of $v$ in $v\ast A$ is $A$ itself. 
As for the ``if'': Since the dual graph of $A$ is connected, we may choose a spanning tree, which uniquely determines a tree of $d$-simplices $T_N$ inside $A$. Since every $(d-1)$-face of $A$ belongs to at most two $d$-simplices, the complex $A$ can be obtained from $T_N$ via identifications of pairs of (not necessarily incident!) boundary facets. 
Now let us take a new vertex $v$. Clearly $v \ast T_N$ is a tree of $(d+1)$-simplices. Let us `mimic' the construction of $A$ from $T_N$, to obtain a construction of $v \ast A$ from $v \ast T_N$.  (By this we mean that if the construction of $A$ from $T_N$ started by gluing two faces $\sigma'$ and $\sigma''$ of $\partial T_N$, then we should start the new construction of $v \ast A$ from taking $v \ast T_N$ by gluing $v \ast \sigma'$ with $v \ast \sigma''$; and so on.) Clearly, $v \ast A$ is obtained from $v \ast T_N$ via identifications of pairs of boundary facets {\em that contain $v$}, and therefore are incident.
\end{proof}

\begin{corollary}
For each $d \ge 3$, some $d$-dimensional pseudomanifold is Mogami, but not LC.
\end{corollary}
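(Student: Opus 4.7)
My plan is to bootstrap from the three-dimensional example of Corollary \ref{cor:MogamiNotLC} by iterated coning. Let $P_3$ denote the Mogami but non-LC $3$-pseudomanifold constructed there, and for each $d \ge 3$ set $P_{d+1} := v_{d+1} \ast P_d$, where $v_{d+1}$ is a new vertex. Coning a pseudomanifold always yields a pseudomanifold: every top face of $v \ast P$ is either a top face of $P$ or of the form $v \ast \tau$ for $\tau$ a codimension-one face of $P$, and in both cases it is incident to at most two top-dimensional cells of the cone. So each $P_d$ is indeed a $d$-pseudomanifold.

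Next I would invoke the two structural facts already at our disposal. By Proposition \ref{prop:StronglyConnected}, every Mogami pseudomanifold is strongly-connected; combined with Proposition \ref{prp:Cone}, this guarantees that $v_{d+1} \ast P_d$ is Mogami whenever $P_d$ is. In the opposite direction, the cited \cite[Proposition 3.25]{BZ} states that $v \ast A$ is LC if and only if $A$ is LC, so if $P_d$ fails to be LC, then so does $v_{d+1} \ast P_d$. Induction on $d$ starting from $P_3$ then produces a Mogami but non-LC $d$-pseudomanifold for every $d \ge 3$.

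The real content of the proof therefore lies entirely in the base case, which is already handled (via a boundary-topology obstruction for the pinched sphere, or equivalently via the non-simply-connected annulus) in Corollary \ref{cor:MogamiNotLC}. The inductive step is essentially forced by the interplay between Propositions \ref{prop:StronglyConnected} and \ref{prp:Cone} on one side and the coning behaviour of LC-ness on the other, so I do not anticipate any genuine obstacle beyond the three-dimensional example.
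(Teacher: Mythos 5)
Your proposal is correct and follows essentially the same route as the paper: the paper also builds the example by iterated coning (starting from a strongly-connected non-LC $k$-pseudomanifold such as a triangulated annulus, of which your $P_3$ is just the first cone), using Proposition \ref{prp:Cone} for the Mogami direction and \cite[Proposition 3.25]{BZ} for the failure of LC-ness. Your extra observations — that coning preserves the pseudomanifold property and that Proposition \ref{prop:StronglyConnected} supplies the strong connectivity needed at each inductive step — are the same implicit ingredients the paper relies on.
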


\begin{proof} Let $k$ be any integer such that $2 \le k \le d-1$. Let $A$ be any $k$-pseudomanifold that is strongly-connected, but not LC. (They exists; for example, for $k=2$ one can choose any triangulation of an annulus; compare Figure \ref{fig:Gluing}, which illustrates the case $d=3$.) 
Take $d-k$ consecutive cones over $C$. The resulting $d$-complex is Mogami by Proposition \ref{prp:Cone} and not  LC by \cite[Proposition 3.25]{BZ}.
\end{proof}

\section{Intermezzo: Planar matchings and extensively-LC manifolds}
Here we show that all $2$-spheres and $2$-balls are Mogami and even LC  \emph{independently from which tree of triangles one starts with}. These results are not new; they essentially go back to Durhuus, cf.\ \cite{Durhuus} \cite[p.~184]{DJ}, but we include them to showcase some proof mechanisms that will later be needed in the $3$-dimensional case. We also discuss a higher-dimensional extension of this phenomenon of ``irrelevance of the chosen tree'', called ``extensively-LC'' property. The reader eager for new theorems may skip directly to the next Section.

We need some additional notation. By a {\mypink cycle} we mean from now on a simple cycle; that is, any closed path in which all vertices are distinct, except for the first and last one. A {\mypink graph} (resp.~a {\mypink multigraph}) is for us a $1$-dimensional simplicial complex (resp.~a $1$-dimensional cell complex). In other words, graphs are multigraphs that do not have loops or double edges. Given any simplicial complex, we call ``free'' any face that is properly contained in only one other face. The free faces in a graph are called {\mypink leaves}; some complexes have no free face. An  {\mypink elementary collapse} is the deletion of a single free face (and of the other face containing it).

 \begin{definition}[Extensively collapsible]
A complex $C$ is called {\mypink extensively-collapsible} if any sequence of elementary collapses reduces $C$ to a complex that is itself collapsible. In other words, $C$ is extensively collapsible if and only if by performing elementary collapses, we never get stuck. 
We also say that $C$ is {\mypink extensively-collapsible onto $D$} if any sequence of elementary collapses that does not delete faces of $D$, reduces $C$ to a complex that is itself  collapsible to $D$. 
\end{definition}

For example, trees are extensively collapsible; in fact, every tree is extensively collapsible onto any of its subtrees. It is well-known that all collapsible $2$-complexes are also extensively-collapsible, cf.~e.g.~\cite{HogAngeloni}. However, an $8$-vertex example of a collapsible but not extensively-collapsible complex (in fact, a $3$-ball) was given in \cite{BL-Dunce}.

\begin{lemma} \label{lem:LCorder}
Let $C$ be a cycle. Let $\mathfrak{M}$ be any planar matching, partial or complete, of the edges of $C$.
Let $G$ be the multigraph obtained from $C$ by pairwise identifying the edges according to $\mathfrak{M}$ (preserving orientation). The following are equivalent:
\begin{compactenum}[\rm (1)]
\item $G$ contains at most one cycle;
\item $G$ can be obtained from $C$ via some sequence of LC gluings.
\end{compactenum}
\end{lemma}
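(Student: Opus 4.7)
My plan is to prove both directions by induction.

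For (2)$\Rightarrow$(1), I would induct on the number of LC gluings performed. Since $C$ has cycle rank $1$, it suffices to check that each LC gluing preserves or decreases cycle rank. An LC gluing identifies two edges of the current multigraph sharing at least one vertex; a direct count of vertices, edges, and connected components shows that cycle rank stays the same in the generic case (two edges sharing exactly one vertex) and drops by $1$ in the ``digon collapse'' case (two parallel edges sharing both endpoints). Hence cycle rank remains $\le 1$ throughout.

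For (1)$\Rightarrow$(2), I would induct on the number $m$ of matched pairs, the base case $m=0$ being trivial. The substantive content is a key claim: \emph{if the non-crossing matching $\mathfrak{M}$ has $m\ge 1$ pairs and $G$ has cycle rank $\le 1$, then some matched pair is already adjacent in $C$.} Granting the key claim, I perform an LC gluing on such a pair first; the quotient by the remaining $m-1$ chords is still $G$, so the cycle-rank condition is inherited, and the matching remains planar on the (shorter) resulting multigraph, so induction applies.

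To prove the key claim, I would argue by contradiction. If every matched pair were non-adjacent in $C$, take an innermost chord $\{e',e''\}$ of the non-crossing matching. By ``innermost'' combined with non-adjacency, the shorter arc between $e'$ and $e''$ contains only unmatched cycle edges, at least one of them. The identification $e'\sim e''$ then collapses both arcs of $C$ simultaneously: since the endpoints of each arc are identified by the pair, each arc becomes a closed loop in $G$, forcing cycle rank $\ge 2$ already from this single pair. Any remaining identifications, being non-adjacent in $C$ and yielding no digon collapses without an adjacent ancestor, cannot bring cycle rank back down---a contradiction.

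The hard part will be the very last step: verifying rigorously that no combination of non-adjacent identifications can produce enough digon collapses to restore cycle rank $\le 1$. A cleaner topological alternative would be to work with $X := D/\mathfrak{M}$ (the $2$-complex obtained from the disk $D$ bounded by $C$ by identifying boundary edges according to $\mathfrak{M}$) and exploit the identity that the cycle rank of $G$ equals $2-\chi(X)$. Then (1) becomes equivalent to $\chi(X)\ge 1$, i.e., $X$ is a disk, and LC gluings correspond exactly to folding the disk along a boundary chord---which gives the desired decomposition directly.
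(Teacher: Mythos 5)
Your direction (2)$\Rightarrow$(1) is correct and is the same counting argument as the paper's. For (1)$\Rightarrow$(2) you take a genuinely different inductive route --- induct on the number of matched pairs by always locating an adjacent matched pair and gluing it first --- whereas the paper inducts on the number of edges of $C$, takes an arbitrary matched pair $\{e',e''\}$, shows that one of the two arcs it cuts off must be \emph{completely} matched, orders the gluings on that arc via a collapsing sequence of the corresponding subtree of $G$, and postpones $e'\equiv e''$ until it has become an LC gluing. Your route is viable, but it stands or falls with your key claim, and your proof of the key claim has a genuine gap, which you yourself flag. The assertion that the remaining identifications ``cannot bring cycle rank back down'' is not a detail to be deferred: it is essentially a restatement of the key claim, because an identification of two edges that are non-adjacent in $C$ can perfectly well be a digon collapse in the \emph{intermediate} multigraph (that is the whole subtlety), and ruling this out by tracking digon collapses dynamically through an arbitrary order of identifications is the wrong frame. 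Moreover, ``each arc becomes a closed loop in $G$'' is only literally true for the inner arc, whose edges are all unmatched and hence survive injectively into $G$; the outer arc becomes a closed \emph{walk}, and whether it contributes an independent cycle to $G$ is precisely what is in question.

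The gap can be closed by a static argument about $G$ itself. Take an innermost matched pair $\{e',e''\}$; by non-adjacency its inner arc $A$ is nonempty and all-unmatched, so no identification touches the interior vertices of $A$, its two endpoints are identified with each other, and its image in $G$ is an honest cycle $L_A$. The outer arc $B$ cannot be completely matched: a complete non-crossing matching of the edges of an arc has an innermost pair consisting of two \emph{adjacent} edges of $C$, contradicting your standing assumption. So $B$ contains an unmatched edge $f$; the image of $B$ is a closed walk traversing $f$ exactly once, hence contains a cycle through $f$, edge-disjoint from $L_A$. Two edge-disjoint cycles force cycle rank at least $2$, the desired contradiction. (This is in substance the paper's observation that one of the two submatchings must be complete, combined with the fact that complete non-crossing matchings always contain an adjacent pair.) Your topological alternative is also only a sketch: $\chi(X)\ge 1$ does not yield ``$X$ is a disk'' --- for a complete matching with $G$ a tree, $X$ is a $2$-sphere with $\chi=2$ --- and ``folding along a boundary chord'' still leaves you to produce the correct \emph{order} of folds, which is the actual content of the lemma.
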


\begin{figure}[htbp]
\begin{center}
\includegraphics[scale=0.67]{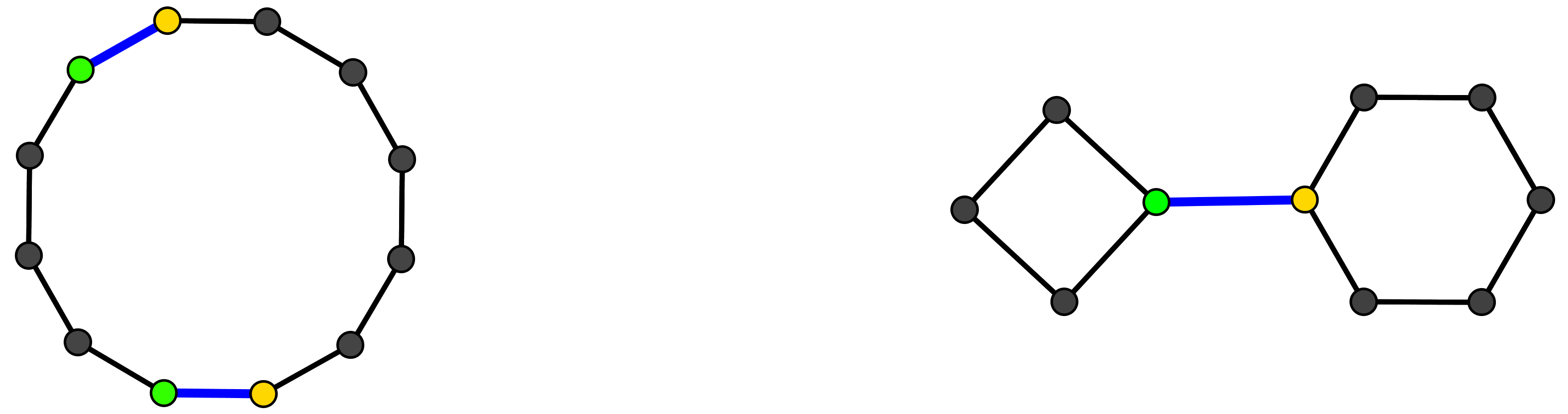}
\caption{A 12-gon (left) and the (multi)graph obtained from it  by identifying only the two blue edges (right). Note that this graph cannot be obtained from the $12$-gon via LC gluings, because the only two edges we are supposed to glue in the $12$-gon are far apart from one another.}
\label{fig:Matching}
\end{center}
\vskip-4mm
\end{figure}

\begin{proof}  
``(2) $\Rightarrow$ (1)'': Let $C$ be a cycle. Any LC gluing of two edges of $C$ either
\begin{compactitem}[ -- ]
\item preserves the number of cycles (if the edges share only one vertex), or
\item `kills' one cycle (in case the edges have both endpoints in common).
\end{compactitem}
So when we perform local gluings on a multigraph, the total number of cycles can only decrease. Since we started with a cycle, $G$ contains at most one cycle.

\smallskip \noindent 
 ``(1) $\Rightarrow$ (2)'':
 Fix a planar matching $\mathfrak{M}$  of $C$. If $\mathfrak{M}$  is a complete matching, the resulting multigraph $G$ will be a tree; if instead it is partial, $G$ will be a cycle with some trees attached. We proceed by induction on the number $n$ of edges of $C$, the case $3 \le n \le 6$ being easy. Let $e', e''$ be two edges of $C$ that are matched in $\mathfrak{M}$. If $e', e''$ are adjacent in $C$, their identification is an LC gluing, and there is nothing to show; so we shall assume they are not adjacent.
The effect of the gluing $e' \equiv e''$ is to squeeze $C$ into a left cycle $L$  and a right cycle~$R$, bridged by a single edge $e$ (as in Figure \ref{fig:Matching}). Moreover, $\mathfrak{M}$ restricts to planar matchings on both $L$ and $R$. Of these two ``submatchings'', at least one has to be complete, otherwise the final multigraph $G$ would contain at least two cycles. We will assume the submatching on $L$ is complete, the other case being symmetric.  Let $G_L$ be the subtree of $G$ corresponding to the edges of $L$. Let $v = e \cap G_L$.  
Choose a collapsing sequence of the tree $G_L$ onto $v$. This yields a natural ordering
$e_0 , \; e_1 , \; \ldots \; , \, e_{k-1}$
of the $k$ edges of the tree $G_L$, where $e_i$ is the  $i$-th edge to be collapsed and $e_{k-1}$ contains~$v$. 
Observe that $e_0$ must be a leaf of $G_L$; it corresponds therefore to a pair of adjacent edges $e'_0$ and $e''_0$ of $L$ matched under $\mathfrak{M}$. Recursively,  
for each $i$, the edges $e'_{i}$ and $e''_{i}$ become adjacent once we have identified $e'_{j}$ with $e''_{j}$, for all $j <i$.  In other words, the identifications $(e'_{i} \equiv  e''_{i})_{0 \le i \le k-1}$, performed in this order, are legitimate LC gluings. Now we are ready to rearrange the sequence, by postponing  the initial step $e' \equiv e''$. So let us set $e_k := e$, $e'_{k}:=e'$ and  $e''_{k}:=e''$. Starting from the initial cycle $C$, let us  perform $(e'_{k} \equiv  e''_{k})$ after all of the gluings $(e'_{i} \equiv  e''_{i})_{0 \le i \le k-1}$ have been carried out. 
The advantage is that $e'_k \equiv e''_k$ is now an LC step, because $e_{k-1}$ and $e_k$ both contained the vertex $v$ (so after $e'_{k-1}$ and $e''_{k-1}$ are identified, the edges $e'_k \equiv e''_k$ become incident at $v$). 

We are eventually left with the right cycle $R$. The subgraph $G_R$ of $G$ corresponding to the edges of $R$ contains at most one cycle. By inductive assumption, $G_R$ can be obtained from $R$ via a sequence of LC gluings. The latter sequence, performed after  $(e'_{i} \equiv  e''_{i})_{0 \le i \le k}$, forms a longer sequence of LC gluings that constructs $G$ from $C$.
\end{proof}

\begin{remark}
Topologically, the proof above can be recapped  as follows. Initially, we have an ``unwanted'' non-LC gluing $e' \equiv e''$ that increases the number of cycles from $1$ to $2$. Since in the end the graph $G$ produced has at most $1$ cycle, at some point the extra cycle has to be suppressed. The only way to suppress a cycle with a planar matching, is to identify some pair of edges $f'$, $f''$ that have both endpoints in common. Our proof strategy was:
\begin{compactitem}
\item to postpone the gluing $e' \equiv e''$, so that it is becomes an LC gluing; and also
\item to anticipate $f' \equiv f''$, so that these two edges are glued when they only share \emph{one} of their endpoints, not both. 
\end{compactitem}
We did not change the matching; we only changed the order in which the matching is performed. But in the rearranged sequence, no step increases the number of cycles by one. (There is also one less step that \emph{decreases} the number of cycles by one; these two steps `canceled out'.)
\end{remark}

Here is a variation we will need in the next Section. Given a graph $G$, we say that a vertex $v$ of $G$ is {\mypink active} if it belongs to a cycle. For example, every vertex of a cycle $C$ is active. If we perform an LC gluing of two adjacent edges of $C$, the vertex between the two edges gets ``de-activated''. In a tree, no vertex is active.

\begin{lemma} \label{lem:LCorder2}
Let $C$ be a cycle. Let $\mathfrak{M}$ be any \emph{complete} planar matching of the edges of $C$. Let $G$ be the tree obtained from $C$ by pairwise identifying the edges according to $\mathfrak{M}$, as in the previous Lemma. 
Given an arbitrary vertex $c_0$ of $C$, there is a sequence of LC gluings that produced $G$ from $C$ and in which the vertex $c_0$ is active until the very last gluing.
\end{lemma}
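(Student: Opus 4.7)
My plan is induction on $n=|E(C)|$. For the base case $n=2$, the cycle is a $2$-cycle between two vertices, $\mathfrak{M}$ is forced to pair its two parallel edges, and their identification is a single LC gluing that shares both endpoints. It is therefore the unique cycle-killing step and, being the only gluing, is also the last. The vertex $c_0$ lies on the $2$-cycle just beforehand, so the statement holds trivially.

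For the inductive step $n\ge 4$, the first thing I would establish is a purely combinatorial fact: any complete non-crossing matching on the edges of a cycle has at least two \emph{ears}, i.e.\ matched pairs of cyclically adjacent edges. A quick way to see this is to pick any chord of the matching diagram, observe that each of the two arcs it cuts off carries an even number of remaining edges with its own complete non-crossing submatching, and apply the standard innermost-chord argument on each side. Because each vertex of $C$ is incident to exactly two edges of $C$, at most one ear can sit at $c_0$; so I can choose an ear $(a,b)$ whose common vertex $v'$ differs from $c_0$.

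I would start the LC sequence with the gluing $a\equiv b$. By the analysis carried out inside the proof of Lemma \ref{lem:LCorder}, this is a legitimate LC step that preserves the number of cycles: it turns $v'$ into a leaf of a pendant edge and leaves a smaller cycle $C'$ of length $n-2$ still containing $c_0$. The matching $\mathfrak{M}$ restricts to a complete planar matching $\mathfrak{M}'$ on $E(C')$, and the subtree $G'\subset G$ corresponding to $\mathfrak{M}'$ is exactly what $C'$ should be collapsed to. Applying the inductive hypothesis to $(C',\mathfrak{M}',c_0)$ supplies an LC sequence on $C'$ producing $G'$ in which $c_0$ stays active until the very last gluing. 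Prepending $a\equiv b$ to that sequence yields the desired LC sequence on $C$.

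The subtlety I would pay most attention to is ensuring that $c_0$ remains active in the \emph{global} multigraph, not merely in the $C'$-subgraph. The only way the inductive sequence could fail this is if some later vertex identification turned the pendant edge into a loop, giving the global multigraph a spurious second cycle. This is ruled out by a cycle count: since the total construction must reduce the number of cycles from $1$ to $0$ exactly once, and the induction places that unique cycle-killing step at the very end of the $C'$-sub-sequence, at every earlier stage there is exactly one cycle in the global multigraph, which lies in the $C'$-subgraph and (by the inductive invariant) contains $c_0$.
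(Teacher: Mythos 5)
Your proof is correct, and it is in substance the mirror image of the paper's argument, organized from the opposite end. The paper works on the tree side: it chooses a collapsing sequence of the target tree $G$ onto the vertex corresponding to $c_0$ (every tree collapses onto any of its vertices), and reads off the gluing order from the leaf deletions, so that the last gluing is the one identifying the two edges at $c_0$. You work on the cycle side: you repeatedly peel off an \emph{ear} of the matching whose apex is not $c_0$. These are the same ordering in disguise, since an ear of $\mathfrak{M}$ is precisely a leaf of $G$, and your ``two ears'' lemma is dual to the fact that a tree with at least two edges has at least two leaves. What your version buys is self-containedness: you do not need to invoke collapsibility of trees, and your induction makes explicit the point the paper's proof leaves implicit, namely that at every stage some matched pair is \emph{currently} adjacent and that performing it keeps the residual data a smaller cycle with a complete planar matching. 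Two small remarks: your innermost-chord sketch for the two-ears claim silently assumes both arcs cut off by the chosen chord are nonempty (if one is empty, that chord is itself the second ear); and your final ``global activity'' worry has a one-line resolution --- the apex $v'$ of the peeled ear is incident to no remaining unglued edge, so the pendant edge it creates can never become a loop --- though your cycle-counting argument is also valid.
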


\begin{proof}
Since every tree is simplicially collapsible onto any of its vertices, we may choose a collapsing sequence of $G$ onto the vertex corresponding to $c_0$. 
Now, every pair of adjacent edges in $C$ matched by $\mathfrak{M}$ corresponds to a leaf in the tree $G$; and elementary collapses in $G$ (which are just leaf deletions) correspond to LC gluings on $C$. Hence, our collapse of $G$ onto $c_0$ induces a sequence of LC gluings, the last of which identifies two edges sharing both endpoints (one of the endpoints  being $c_0$). 
\end{proof}

\begin{remark} Unlike Lemma~\ref{lem:LCorder}, Lemma~\ref{lem:LCorder2} does not extend to partial matchings. For example, let us start with a hexagon of vertices $\{a,b,c,d,e,f\}$, and let us identify $[b,c]$ and $[c,d]$ (preserving orientation). This makes $b$ coincide with $d$. Let us then glue together the edges $[a,b]$ and $[d,e]$, which have just become adjacent. The resulting partial matching $\mathfrak{M}$ satisfies the condition of Lemma~\ref{lem:LCorder}; however, there is only one possible possible sequence of LC gluings realizing $\mathfrak{M}$, and this only possible sequence deactivates the vertex $c$ in the \emph{first} step. 
\end{remark}

\begin{definition}[Extensively LC]
Let $P$ be a $d$-dimensional pseudomanifold.  
We say that $P$ is {\mypink extensively LC} if, for \emph{any} spanning tree $T$ of the dual graph of $B$, (a complex combinatorially equivalent to) $P$ can be obtained via LC gluings from the tree of $d$-simplices $T_N$ dual to $T$.
\end{definition}

If we replace ``any'' with ``some'' in the definition above, we recover the classical definition of LC. Hence, ``extensively-LC'' trivially implies LC. See Remark \ref{rem:difference} below for the difference.

\begin{proposition}[{essentially Durhuus \cite{Durhuus}}]
All $2$-balls and $2$-spheres are extensively LC. 
\end{proposition}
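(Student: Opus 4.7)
The plan is to reduce the proposition directly to Lemma~\ref{lem:LCorder}. Given a $2$-ball or $2$-sphere $P$ and any spanning tree $T$ of its dual graph, the associated tree of triangles $T_N$ is a triangulated disk, obtained by cutting $P$ along each interior edge dual to a non-tree edge of $T$. Its boundary $\partial T_N$ is a single cycle $C$, and reconstructing $P$ from $T_N$ amounts to identifying, in pairs, those edges of $C$ that came from the same cut. This prescribes a planar matching $\mathfrak{M}$ on the edges of $C$ (planar because the gluing chords live inside the disk $T_N$): complete if $P$ is a $2$-sphere, and partial (leaving the edges of $\partial P$ untouched) if $P$ is a $2$-ball.

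Since $d=2$, every LC gluing on the pseudomanifold identifies two boundary edges sharing at least a vertex, and so corresponds exactly to an LC gluing on the boundary cycle in the sense of Lemma~\ref{lem:LCorder}. Hence, producing $P$ from $T_N$ via LC gluings on pseudomanifolds is equivalent to realizing $\mathfrak{M}$ on $C$ via LC gluings of graphs, and the proposition reduces to verifying that the quotient multigraph $G$ of $C$ by $\mathfrak{M}$ contains at most one cycle.

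The main step is this last verification. The graph $G$ is canonically identified with the subgraph of the $1$-skeleton of $P$ consisting of the \emph{cut edges} (the edges of $P$ dual to non-tree edges of $T$), together with $\partial P$ in the $2$-ball case. It is connected as the image of the connected cycle $C$. A direct Euler-characteristic count gives that $G$ has exactly $|V(P)|-1$ edges when $P$ is a $2$-sphere, and $|V(P)|$ edges when $P$ is a $2$-ball. Moreover, every vertex of $P$ actually appears in $G$: boundary vertices lie on $\partial P$; and for any interior vertex $v$, its link in the dual graph of $P$ is a cycle, which cannot be entirely contained in the spanning tree $T$, so at least one cut edge must be incident to $v$. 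Hence $|V(G)| = |V(P)|$, and connectedness yields that $G$ has $0$ cycles in the sphere case and exactly $1$ cycle in the ball case. Lemma~\ref{lem:LCorder} then produces the required sequence of LC gluings. Conceptually, the only real obstacle is the planar-duality observation hidden in the link argument above: the cut edges always form a forest, irrespective of which spanning tree $T$ we started from.
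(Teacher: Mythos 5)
Your proof is correct and follows the same route as the paper's: reduce the statement to Lemma~\ref{lem:LCorder} via the planar matching $\mathfrak{M}$ induced on the boundary cycle of $T_N$. The only difference is that you supply a detailed verification (identifying the quotient multigraph with the cut-edge subgraph of the $1$-skeleton of $P$ and counting edges and vertices via Euler's formula) of the key claim that this multigraph contains at most one cycle, a claim the paper's proof simply asserts; your verification is correct.
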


\begin{proof} Let $B$ be an arbitrary $2$-sphere or $2$-ball. Let $T$ and $T_N$ be as in the definition of extensively-LC. By construction, we know that $B$ is obtained from $T_N$ by some matching $\mathfrak{M}$ of the edges of $\partial T_N$, which is a $1$-dimensional sphere (or in other words, a cycle). Note that the matching is uniquely determined once the tree $T_N$ is chosen. If $B$ is a $2$-sphere, the matching is complete; if $B$ is a ball, $\partial B$ is a cycle, the matching is partial, and the edges left unmatched are precisely the edges of $\partial B$. In both cases, the multigraph obtained from $\partial T_N$ via the identifications in $\mathfrak{M}$ contains at most one cycle. Using Lemma \ref{lem:LCorder}, we conclude.
\end{proof}

If $T$ is a spanning tree of the dual graph of a (connected) $d$-manifold, following  \cite[p.~214]{BZ} we denote by {\mypink $K^T$} the {$(d-1)$}-di\-mensional subcomplex of the manifold determined by all the $(d-1)$-faces that are not intersected by $T$. 
When $d=3$, $K^T$ is $2$-dimensional. Recall that for $2$-complexes collapsibility and extensive-collapsibility are equivalent notions. Using this, it is an easy exercise to adapt the original proofs of \cite[Corollary 2.11]{BZ} and of \cite[Corollary 3.11]{BZ}, respectively, to derive the following results:

 \begin{theorem} \label{thm:NonELC}
 Let $S$ be a triangulated $3$-sphere. The following are equivalent:
 \begin{compactenum}[\rm (i)]
 \item $S$ is extensively-LC;
 \item for every spanning tree $T$ of the dual graph of $S$, the complex $K^T$ is collapsible;
 \item for every tetrahedron $\Delta$ of $S$, the $3$-ball $S - \Delta$ is extensively collapsible;
 \item for some tetrahedron $\Delta$ of $S$, the $3$-ball $S - \Delta$ is extensively collapsible.
 \end{compactenum}
\end{theorem}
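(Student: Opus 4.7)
The plan is to mirror the proofs of \cite[Corollary 2.11]{BZ} and \cite[Corollary 3.11]{BZ}, which establish the ``for some tree / for some tetrahedron'' analogues, and to upgrade the quantifier to ``for every'' on both sides of each equivalence. The crucial additional ingredient is that collapsibility and extensive-collapsibility coincide for $2$-complexes (as recalled earlier in the paper), so a collapsible $K^T$ (which has dimension $d-1=2$) is automatically extensively-collapsible.

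The implication (iii) $\Rightarrow$ (iv) is immediate. For (i) $\Leftrightarrow$ (ii), I would invoke directly the per-tree bijection underlying \cite[Corollary 3.11]{BZ}: for a fixed spanning tree $T$ of the dual graph of $S$ with associated tree of tetrahedra $T_N$, LC constructions of $S$ from $T_N$ correspond bijectively to collapsing sequences of $K^T$, via the dictionary ``LC gluing of two boundary triangles along an edge $e$'' $\leftrightarrow$ ``elementary collapse of the free edge $e$ in $K^T$''. For each $T$, an LC construction from $T_N$ therefore exists if and only if $K^T$ is collapsible; quantifying over all $T$ yields (i) $\Leftrightarrow$ (ii).

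For (iv) $\Rightarrow$ (ii), let $\Delta$ be a tetrahedron with $S-\Delta$ extensively-collapsible, and let $T$ be an arbitrary spanning tree of the dual graph of $S$. Root $T$ at $\Delta$ and pick a linear extension of its parent-child order. Removing the tetrahedra of $S-\Delta$ in this order, each via the triangle-tetrahedron collapse across its parent triangle, defines a ``tetrahedral collapse'' of $S-\Delta$ whose end result is the $2$-complex $K^T$ (up to lower-dimensional dangling faces that collapse trivially). Extensive-collapsibility of $S-\Delta$ guarantees that this partial collapse extends to a full collapse onto a point, forcing $K^T$ to be collapsible. Since $T$ was arbitrary, (ii) follows.

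For (ii) $\Rightarrow$ (iii), fix $\Delta$ and an arbitrary partial collapse $\gamma$ of $S-\Delta$ ending at some complex $R$; I show $R$ collapses to a point in two stages. In \emph{Stage 1}, I extend $\gamma$ to a full tetrahedral collapse by continuing with triangle-tetrahedron collapses. This is always possible because, at any moment with tetrahedra still surviving, the cut edges in the dual graph of $S$ between collapsed and surviving tetrahedra cannot belong to the partial spanning tree traced so far by the collapsed tetrahedra (that tree has both endpoints among the collapsed ones), so the triangles corresponding to those cut edges are still present and are contained in exactly one surviving tetrahedron, hence free. In \emph{Stage 2}, once no tetrahedra remain, the completed tetrahedral collapse determines a full spanning tree $T$ of the dual graph rooted at $\Delta$, and the resulting $2$-complex $R'$ is a partial collapse of $K^T$. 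By hypothesis (ii), $K^T$ is collapsible, hence extensively-collapsible because $\dim K^T = 2$, so $R'$ collapses to a point; splicing the two stages gives a collapse of $R$ to a point. The main obstacle is the bookkeeping in Stage 1: one must verify that arbitrary interleaving of triangle-tetrahedron, edge-triangle and vertex-edge collapses in $\gamma$ does not spoil the cut-edge argument for producing free triangles, and must reconcile the result of the extended tetrahedral collapse with a genuine partial collapse of $K^T$. This is not conceptually deep but is finicky, consistent with the author's flagging it as a routine adaptation of \cite{BZ}.
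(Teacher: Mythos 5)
Your proposal is correct and follows exactly the route the paper intends: it fleshes out the stated ``easy exercise'' of adapting the proofs of \cite[Corollary 2.11]{BZ} and \cite[Corollary 3.11]{BZ} via the per-tree dictionary between LC gluings and elementary collapses of $K^T$, with the equivalence of collapsibility and extensive collapsibility for $2$-complexes as the key extra ingredient. The one point you flag as finicky (that a triangle dual to a cut edge between removed and surviving tetrahedra cannot yet have been removed, even in the presence of interleaved lower-dimensional collapses) does go through, since a triangle can only disappear once both of its cofacets are gone.
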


 \begin{theorem}  \label{thm:NonELC2} Let $B$ be a triangulated $3$-ball. The following are equivalent:
 \begin{compactenum}[\rm (i)]
 \item $B$ is extensively-LC;
 \item for some tetrahedron $\Delta$, the $3$-ball $B - \Delta$ is extensively collapsible to $\partial B$;
 \item for every tetrahedron $\Delta$, the $3$-ball $B - \Delta$ is extensively collapsible to $\partial B$;
\item for every spanning tree $T$ of the dual graph of $S$, the complex $K^T$ collapses to $\partial B$.
 \end{compactenum}
\end{theorem}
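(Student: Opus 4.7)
The plan is to adapt the strategy of \cite[Corollary 3.11]{BZ}, upgraded to the extensive setting by invoking the classical fact that for $2$-dimensional complexes collapsibility and extensive collapsibility coincide (cf.~\cite{HogAngeloni}). This fact will let us promote the dictionary between LC constructions and collapses from ``some'' to ``every'' without additional work, once the relevant data is localized on the $2$-dimensional subcomplex $K^T$.

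The first step is to set up the dictionary. Fix a spanning tree $T$ of the dual graph of $B$ and a root tetrahedron $\Delta \in T$, and form the corresponding tree of tetrahedra $T_N$. The boundary facets of $T_N$ split into those lying in $\partial B$ and matched pairs, one pair per triangle of $K^T$. An LC gluing of a matched pair identifies two boundary triangles sharing at least one edge; dually, this corresponds to an elementary collapse in $K^T$ removing the newly-interior triangle across its free edge. Conversely, every elementary collapse of $K^T$ through a free edge prescribes an LC gluing. Hence LC constructions of $B$ from $T_N$ are in natural bijection with collapsing sequences of $K^T$ onto $\partial B$. Applying this equivalence simultaneously to every spanning tree of the dual graph of $B$ yields (i) $\Leftrightarrow$ (iv).

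For (ii) $\Leftrightarrow$ (iii) $\Leftrightarrow$ (iv), I would exploit the two-phase structure of a collapse of $B - \Delta$: by a standard rearrangement argument, every collapse that removes a tetrahedron together with a free interior triangle can be moved to happen first, leaving a purely $2$-dimensional subcomplex that then collapses to $\partial B$. The outcome of the first phase encodes a spanning tree $T$ of the dual graph of $B$ rooted at $\Delta$, and the residual $2$-complex is precisely $K^T$. This first phase itself never gets stuck, because any rooted subtree of tetrahedra has a leaf, providing a free triangle. Hence ``$B - \Delta$ is extensively collapsible to $\partial B$'' is equivalent to ``for every spanning tree $T$ rooted at $\Delta$, $K^T$ is extensively collapsible to $\partial B$'', which by the $2$-dimensional equivalence reduces to ``for every such $T$, $K^T$ is collapsible to $\partial B$''. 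Since any spanning tree can be rooted at any tetrahedron, this condition is independent of $\Delta$ and matches (iv) verbatim, giving the three equivalences at once. The main technical obstacle I anticipate is the rearrangement lemma that defers lower-dimensional collapses past every tetrahedral one; it reduces to a local commutativity check together with the observation that removing a tetrahedron can only enlarge the set of free edges available in the surrounding $2$-dimensional subcomplex, so deferring never creates a deadlock in the tetrahedral phase.
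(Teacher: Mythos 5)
Your proposal is correct and follows essentially the same route the paper intends: the paper gives no written proof, remarking only that the statement is obtained by adapting \cite[Corollary 3.11]{BZ} via the fact that for $2$-complexes collapsibility and extensive collapsibility coincide, and your two-phase rearrangement plus the fixed-tree dictionary between LC constructions from $T_N$ and collapses of $K^T$ onto $\partial B$ is precisely that adaptation. The only overstatement is the claimed ``natural bijection'' between LC constructions and collapsing sequences of $K^T$ (the correspondence in \cite{BZ} needs care about when the relevant edge is actually free), but you only use the existence equivalence for each fixed spanning tree, which is exactly what \cite{BZ} supplies.
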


\begin{corollary}
Every triangulated $d$-ball or $d$-sphere with less than $8$ vertices is extensively-LC.
\end{corollary}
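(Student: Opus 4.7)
The plan is to split by dimension and exploit the vertex-count restriction to invoke known non-existence results for small collapsibility counterexamples.

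For $d=2$ there is nothing to prove, since the Proposition in the previous section (essentially Durhuus) already gives that every $2$-ball and every $2$-sphere is extensively LC regardless of its vertex count. For $d\ge 4$ the hypothesis ``fewer than $8$ vertices'' is very restrictive: any triangulated $d$-ball needs at least $d+1$ vertices, so for $d\ge 7$ there are no examples at all, while for $4\le d\le 6$ the only candidates are the $d$-simplex, the boundary of a $(d+1)$-simplex, and a short list of further small triangulations that one can check case-by-case to be extensively LC (for instance by inspecting $K^T$ for each of the finitely many spanning trees of the dual graph).

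The essential case is $d=3$, and the idea is to reduce extensive-LC to a purely $2$-dimensional collapsibility question. By Theorem \ref{thm:NonELC}, a $3$-sphere $S$ is extensively LC if and only if the $3$-ball $S-\Delta$ is extensively collapsible for some (equivalently, every) tetrahedron $\Delta$; by Theorem \ref{thm:NonELC2}, a $3$-ball $B$ is extensively LC if and only if the $3$-ball $B-\Delta$ is extensively collapsible onto $\partial B$ for some $\Delta$. In both cases the auxiliary $3$-ball inherits the vertex bound and has at most $7$ vertices.

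The key external input is then a vertex bound from \cite{BL-Dunce}: the smallest $3$-ball which is collapsible but \emph{not} extensively collapsible has exactly $8$ vertices. (The underlying phenomenon is that Zeeman's dunce hat cannot be triangulated with fewer than $8$ vertices.) Consequently every $3$-ball on at most $7$ vertices is extensively collapsible, which immediately finishes the sphere case. For the ball case one needs the corresponding \emph{relative} statement, that a $3$-ball on $\le 7$ vertices is extensively collapsible onto its boundary; I would deduce this either by applying the absolute bound to the double $B\cup_{\partial B}B$, or by appealing to the (finite) enumeration of the combinatorial types of triangulated $3$-balls with at most $7$ vertices.

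The main obstacle I foresee is precisely this relative version: showing that on $\le 7$ vertices one can never get stuck when trying to collapse $B-\Delta$ down to $\partial B$, rather than merely to a point. Once that relative vertex bound is in place, the corollary is immediate from Theorems \ref{thm:NonELC} and \ref{thm:NonELC2}, and the tightness of the number $8$ reflects exactly the minimum-vertex dunce hat.
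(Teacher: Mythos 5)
Your reduction via Theorems \ref{thm:NonELC} and \ref{thm:NonELC2} is a sensible start, but the argument has a genuine gap exactly where you flag it. The engine of the paper's proof is not a minimality statement about $3$-balls, but the theorem of Bagchi and Datta \cite{BagchiDatta} that \emph{every acyclic $2$-complex with fewer than $8$ vertices is collapsible}. This is the right tool because it makes ``never getting stuck'' automatic in dimension $2$: any partial collapse of a collapsible $2$-complex on $<8$ vertices is again an acyclic $2$-complex on $<8$ vertices, hence collapsible, so collapsibility and extensive collapsibility coincide below $8$ vertices. Accordingly the paper works with the $2$-dimensional complexes $K^T$ (conditions (ii) and (iv) of the two theorems), which inherit the vertex bound, rather than with the $3$-ball $S-\Delta$ as you do. Your absolute claim for $3$-balls on $\le 7$ vertices can be rescued from Bagchi--Datta (a partial collapse of a $3$-ball retains a free triangle as long as a tetrahedron remains, so one can only get stuck at an acyclic $2$-complex), but you do not supply this step; as stated, your ``key external input'' is an unproved strengthening of the $8$-vertex example recorded in \cite{BL-Dunce}, and in any case it presupposes that every $\le 7$-vertex $3$-ball is collapsible, which you also do not establish.

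The relative case is the real problem, and neither of your proposed fixes works as written. The double $B\cup_{\partial B}B$ duplicates every interior vertex, so it need not have fewer than $8$ vertices, and even when it does, extensive collapsibility of the double does not formally yield extensive collapsibility of $B-\Delta$ onto $\partial B$; ``appealing to the finite enumeration'' is not an argument. The clean route is again the paper's: verify condition (iv) of Theorem \ref{thm:NonELC2} directly, using that $K^T$ is a $2$-complex on $<8$ vertices together with the $2$-dimensional equivalence of collapsibility and extensive collapsibility. Finally, your treatment of $d\ge 4$ is a placeholder rather than a proof: for $d=4,5$ there really are triangulations beyond the simplex and the boundary of a simplex, and ``one can check case-by-case'' is the entire content of the claim in those dimensions.
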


\begin{proof}
By a result of Bagchi and Datta \cite{BagchiDatta}, all acyclic $2$-complexes with less than $8$ vertices are collapsible; it follows that all collapsible $2$-complexes with less than $8$ vertices are extensively collapsible \cite{BL-Dunce}.
\end{proof}

\begin{remark} \label{rem:difference}
Some $3$-sphere with $8$ vertices that is LC, but not extensively, is presented in~\cite{BL-Dunce}. After we remove a tetrahedron from such sphere, we obtain a collapsible ball $B$; but there is also a sequence of elementary collapses that from $B$ gets us stuck in an $8$-vertex triangulation of the Dunce Hat~\cite{BL-Dunce}. (See also {\rm \cite[pp.~107--109]{BThesis}} for a similar example with $12$ vertices.) Moreover, the boundary of the $7$-simplex is not extensively-LC, since (after the removal of an arbitrary $6$-face) there is a sequence that gets us stuck in a $8$-vertex Dunce Hat: This was first shown by Crowley et al., cf.~\cite{Cetal} \cite[Section~5.3]{BL-Exp}.
\end{remark}

\section{The only Mogami nucleus is the simplex}
Let us  now focus on $d=3$. We wish to study how  LC or Mogami steps in a construction of a $3$-manifold affect the boundary-link of a single vertex. The four examples we present will be crucial in the proof of our Main Theorems. First we need one additional notation.

\begin{definition}[Merging]
Let $C, D$ be two cycles with an edge $e$ in common. The {\mypink merging} operation produces a new cycle as follows: we take the union $C \cup D$, and we delete the edge $e$. 
\end{definition}

\begin{example} \label{ex:HomA}  \em
Let $B$ be a $3$-ball. Let $e$ be an edge in $\partial B$. Let $v$ and $w$ be the two vertices in $\link(e, \partial B)$. If we identify the two triangles $v \ast e$ and $w \ast e$, this is a legitimate LC gluing -- in fact, a \fold. Let $Q$ be the pseudomanifold obtained. Topologically, $Q$ is also a $3$-ball. With slight abuse of notation, let us keep calling $v$ be the vertex of $Q$ resulting from the identification of $v$ and $w$. It is easy to see that  $\link(v,\partial Q)$ is the cycle obtained by merging $\link(v,\partial B)$ and $\link(w, \partial B)$.
\end{example}

\begin{example} \label{ex:HomB} \em
Let $B$ be a $3$-ball. Let $x$ be a vertex in  $\partial B$. Let $e_1$, $e_2$ be two edges in $\link(x, \partial B)$. If we identify the two triangles $x \ast e_1$ and $x \ast e_2$, this is a legitimate Mogami gluing. Let $v_1, w_1$ be the two endpoints of $e_1$. Similarly, let $v_2, w_2$ be the two endpoint of $e_2$, labeled so that the vertex that is identified to $v_1$ is $v_2$. Let $Q$ be the obtained pseudomanifold (which is not a ball, this time.) Let us call $v$ the vertex of $Q$ resulting from the identification of the two vertices $v_1$ and $v_2$. It is easy to see that $\link(v,\partial Q)$ is a cycle. It is obtained from $C_1=\link(v_1,\partial B)$ and $C_2=\link(v_2, \partial B)$ with an operation that is an LC gluing plus a merging. More precisely, $C_1$ and $C_2$ do not have an edge in common; they share only the vertex $x$. However, the cycle $\link(v,\partial Q)$ can be obtained from $C_1$ and $C_2$ by first identifying $[x,w_1]$ (which is in $C_1$) and $[x,w_2]$ (which is in $C_2$), and then by performing a merging at the resulting edge $[x,w]$. 
\end{example}

\begin{example} \label{ex:NonHom} \em
Let $P$ be a pseudomanifold obtained from a $3$-ball by performing one Mogami gluing of $2$ triangles sharing only a vertex $v$, and then another Mogami gluing of $2$ triangles sharing only a vertex $w \ne v$, such that $v$ and $w$ belong to adjacent triangles in $P$. Then: 
\begin{compactitem}
\item $\link(v, \partial P)$ is the disjoint union of two cycles, $A_v$ and $B_v$; 
\item $\link(w, \partial P)$ is also the disjoint union of two cycles, $A_w$ and $B_w$;
\item $\link(v, \partial P) \cap \link(w, \partial P)$ consists of an edge $e$, which (up to relabeling) belongs to $A_v \cap A_w$. 
\end{compactitem}
Let us identify the two triangles $v \ast e$ and $w\ast e$, and let $Q$ be the resulting pseudomanifold. With the usual abuse of notation, let us call $v$ be the vertex of $Q$ obtained from the identification of $v$ and $w$. It is easy to see that $\link(v,\partial Q)$ is a disjoint union of \emph{three} cycles, namely $B_v$, $B_w$, and a third cycle obtained by merging $A_v$ and $A_w$. In particular, $\partial Q$ is not homeomorphic to $\partial P$. (This pathology is due to the presence of two different singularities in $P$, which are identified in the gluing; on LC pseudomanifolds, 
{\fold} does preserve the homeomorphism type).
\end{example}

\begin{example} \label{ex:Impossible} \em
Let us start with an annulus of $4$ squares, and let us subdivide each square into four triangles by inserting the two diagonals (Figure \ref{fig:Annulus}). Let $w$ be one of the four square barycenters. Let $a,b,c,d$ be the four corners of the square containing $w$, labeled  so that $ab$ and $cd$ are free edges (i.e.~edges that belong to one triangle only). Let $A$ be the obtained simplicial complex. Let us take a cone $v \ast A$ from a vertex $v$ outside $A$. This $v \ast A$ is a Mogami pseudomanifold by Proposition \ref{prp:Cone}. 
The boundary-link of $v$ consists of $2$ squares. Note that from $v \ast A$ one can easily obtain a Mogami $3$-ball without interior vertices by pairwise identifying the top four triangles.
Instead, from $v \ast A$, let us perform a {\fold} step by gluing $[c,d,v]$ with $[c,d,w]$. Let $P$ be the obtained pseudomanifold. Since $v \ast A$ contained triangles $[a,b,v]$ and $[a,b,w]$, now that $v$ is carried onto $w$ we have in $P$ two distinct triangles $\Delta_1$ and $\Delta_2$ that share one edge and also the opposite vertex. Hence $P$ (which topologically is homeomorphic to $v \ast A$, cf.~Example~\ref{ex:HomA}) is not a simplicial complex.\end{example}

\begin{figure}[htbp]
\begin{center}
\includegraphics[scale=0.7]{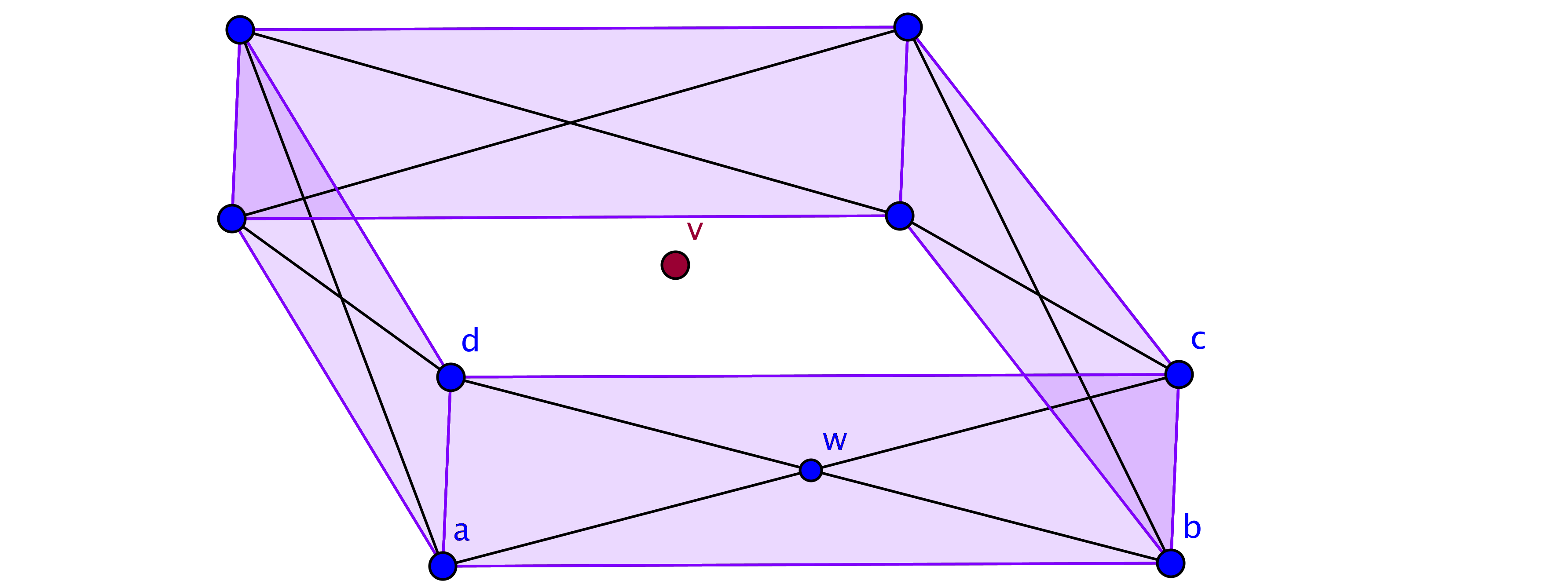}
\caption{The triangulated annulus $A$ (in purple). The cone $v \ast A$ is a Mogami pseudomanifold, with a singularity at $v$.  With a couple of LC gluings (on the four top triangles), from $v \ast A$ it is possible to get a $3$-ball without interior vertices. But if from $v \ast A$ we glue together the triangles $[v,c,d]$ and $[c,d,w]$, this is a faux pas (Remark \ref{rem:Impossible}).}
\label{fig:Annulus}
\end{center}
\vskip-3mm
\end{figure}

\begin{remark} \label{rem:Impossible} No $3$-ball without interior vertices can be obtained via Mogami gluings from the pseudomanifold $P$ of Example \ref{ex:Impossible}.
In fact, assume by contradiction that $R$ is a ball without interior vertices obtained with a Mogami construction from $P$. 
Because $R$ is a simplicial complex, in $R$ the two triangles $\Delta_1$ and $\Delta_2$  that share the vertices $v,a,b$ must be identified at some point; so we might as well glue them immediately. Let us call  $Q$ be the pseudomanifold obtained from $P$ by gluing $\Delta_1 \equiv \Delta_2$.  We may assume that $R$ is obtained via Mogami gluings from $Q$. 
Since $R$ is a ball without interior vertices, the link of $v$ in $R$ must be a disk. Since $Q$ is not a simplicial complex the notion of ``link of $v$ in $Q$'' is not well-defined; but we can look at the {\em spherical link} $L$ of $v$ in $Q$, which is what we would get by intersecting $Q$ with a sphere of small radius centered at $v$. (In simplicial complexes, this is isomorphic to the vertex link.) Up to homeomorphism, we can think of $L$ as a $2$-dimensional simplicial complex obtained from a finely triangulated annulus by identifying (coherently, without twists) two parallel edges in {\em different} components of the boundary. Note that $L$ is not planar, in the sense that no simplicial complex homeomorphic to $L$ can be drawn in $\mathbb{R}^2$ without self-intersections. Now, any further Mogami step performed on $Q$ will possibly modify $L$ only via identifications in its boundary. 
Topologically, these steps may transform the spherical link of $v$ into a torus, but not into a $2$-ball (or a $2$-sphere). A contradiction.
\end{remark}

In fact, the topological argument of Remark \ref{rem:Impossible} above proves the following:

\begin{lemma}\label{lem:Impossible} 
Let $M$ be a Mogami pseudomanifold obtained from a $3$-ball via a single Mogami gluing that is not an LC gluing. Let $v$ be the singular vertex of $M$. Let $C_1$ and $C_2$ be the two disjoint components of the boundary-link of $v$. 
Suppose there is a vertex $w \ne v$ in $\partial M$ such that $\link (v,\partial M) \cap \link (w,\partial M)$ consists of $2$ edges, one in $C_1$ (say, $[a,b]$) and one in $C_2$ (say, $[c,d]$). Let $P$ be the pseudomanifold obtained from $M$ via the LC gluing that identifies the triangles $[v,c,d]$ and $[c,d,w]$.  There is no $3$-ball without interior vertices that can be obtained via Mogami gluings from the pseudomanifold $P$.
\end{lemma}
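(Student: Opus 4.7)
My plan is to extend the topological argument of Remark~\ref{rem:Impossible} from the specific pseudomanifold of Example~\ref{ex:Impossible} to the full hypothesis of the lemma. Supposing, for contradiction, that some $3$-ball $R$ without interior vertices is Mogami-constructible from $P$, I begin with a reduction. The LC gluing $[v,c,d]\equiv[c,d,w]$ producing $P$ identifies $v$ with $w$, so the two boundary triangles $[v,a,b]$ and $[w,a,b]$ of $M$ now share all three vertices $\{v,a,b\}$ in $P$. Since $R$ is simplicial, these two triangles must be identified at some point, and from the moment $v=w$ this identification is a valid Mogami gluing. I may therefore rearrange the sequence so that it is performed first; call the resulting pseudomanifold $Q$, and view $R$ as built from $Q$ by further Mogami gluings.

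Next, I compute the spherical link $L$ of $v$ in $Q$ using the local analyses of Examples~\ref{ex:HomA}--\ref{ex:HomB}. Because $M$ arose from a $3$-ball by a single Mogami non-LC gluing that made $v$ singular, the spherical link of $v$ in $M$ is an annulus with boundary cycles $C_1$ and $C_2$. The spherical link of $w$ in $M$ is a $2$-disk whose boundary cycle contains the edges $[a,b]$ and $[c,d]$. The LC step glues these two pieces along $[c,d]\subset C_2$, and the additional identification $[v,a,b]\equiv[w,a,b]$ glues the copy of the edge $[a,b]$ in $C_1$ to the copy in the disk. Up to homeomorphism, $L$ is thus obtained from an annulus by identifying one boundary edge on each of its two boundary circles. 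A direct Euler-characteristic computation gives $\chi(L)=-1$; with a single boundary circle this forces $L$ to be a torus-minus-disk (or, if the identification is orientation-reversing, a Klein-bottle-minus-disk). In either case $L$ has positive genus and is non-planar.

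Now I argue that no further Mogami gluings can turn the link of $v$ into a $2$-disk. Every subsequent Mogami step falls into one of three cases: (a) it does not involve $v$ and leaves $L$ unchanged; (b) it identifies two boundary triangles both containing $v$, identifying a pair of edges of $\partial L$ (healings, where the two edges form a bigon, are included here); or (c) it identifies a triangle at $v$ with a triangle not at $v$, forcing $v$ to be merged with another vertex and enlarging $L$ by attaching the other vertex's spherical link along a subcomplex of $\partial L$. In each case, all modifications occur on $\partial L$, so the open surface $\operatorname{int}(L)$ embeds as an open subset of the spherical link of $v$ in the resulting pseudomanifold. Iterating, $\operatorname{int}(L)$ embeds as an open subset of $\link(v,R)$. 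Since $R$ is a $3$-ball without interior vertices and $v\in\partial R$, the link $\link(v,R)$ is a $2$-disk; but no open surface of positive genus embeds into an open $2$-disk, by the classification of surfaces. Contradiction.

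The main technical obstacle is the persistence claim just sketched, particularly in case (c) and when healings are intertwined with genuine Mogami identifications. The point in all situations is that the new identifications take place only on the boundary of the current spherical link, so $\operatorname{int}(L)$ is preserved as an embedded open subset throughout the construction. Once this is verified in detail, the chain of embeddings $\operatorname{int}(L)\hookrightarrow\link(v,R)$ is forced and the topological contradiction closes the argument.
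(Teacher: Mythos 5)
Your proof is correct and follows essentially the same route as the paper, which establishes this lemma precisely by the topological argument of Remark \ref{rem:Impossible}: reduce to the forced identification of the two triangles on $\{v,a,b\}$, observe that the spherical link of $v$ becomes a non-planar surface (an annulus with one boundary edge on each boundary circle identified), and note that all subsequent Mogami gluings modify this link only along its boundary, so it can never become the $2$-disk required at a boundary vertex of a $3$-ball without interior vertices. Your Euler-characteristic computation and the explicit persistence-of-the-embedded-interior argument are somewhat more detailed than the paper's wording, but the approach is the same.
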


We are now ready  to prove our main result.

\begin{theorem}\label{thm:main}
Let $B$ be a Mogami $3$-ball without interior vertices. Let $T_N$ be the tree of tetrahedra from which $B$ is constructed, via some sequence of Mogami gluings. Then, $B$ can also be constructed from $T_N$ via some sequence of {\em LC} gluings. In particular, all Mogami $3$-ball without interior vertices are LC.
\end{theorem}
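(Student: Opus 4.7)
I proceed by induction on the number $n$ of non-LC Mogami gluings in the given construction of $B$ from $T_N$. The base case $n=0$ is the statement itself. For $n\ge 1$, my goal is to produce a Mogami construction of $B$ from $T_N$ with strictly fewer non-LC gluings; induction then finishes the argument.

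Among the $n$ non-LC gluings, pick the \emph{last} one $\alpha$, so that every step after $\alpha$ is an LC gluing. By definition $\alpha$ identifies two boundary triangles $\Delta_1 = v\ast e_1$ and $\Delta_2 = v\ast e_2$ that share only the vertex $v$. As in Examples \ref{ex:HomB}--\ref{ex:NonHom}, the effect of $\alpha$ is to pinch the boundary at $v$: the number of connected components of $\link(v, \partial\cdot)$ increases by one. Because $B$ has no interior vertices, $\link(v, \partial B)$ must be a single cycle (it is the link of $v$ in the $2$-sphere $\partial B$). Consequently, all extra components of $\link(v, \partial\cdot)$ created during the construction must be destroyed before the end. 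A review of the four types of steps (Examples \ref{ex:HomA}--\ref{ex:Impossible}) shows that the \emph{only} step that can decrease the number of components of a boundary-link is a \emph{healing}, i.e.\ the identification of two boundary triangles sharing exactly two edges. Let $\beta$ be the first healing after $\alpha$ that kills a component of $\link(v, \partial\cdot)$ produced by $\alpha$.

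Next, I commute $\alpha$ forward past each of the (LC) gluings strictly between $\alpha$ and $\beta$, so that $\alpha$ ends up immediately before $\beta$. Two consecutive gluings commute whenever they act on disjoint triangles; the possible non-disjoint interactions are classified by Examples \ref{ex:HomA}--\ref{ex:Impossible}, and the choice of $\beta$ as the \emph{first} relevant healing together with Lemma \ref{lem:Impossible} (which forbids the pathological configuration of Example \ref{ex:Impossible} inside any Mogami construction of a ball without interior vertices) excludes every case where the swap would either be illegal or change the resulting complex. Once $\alpha$ and $\beta$ are adjacent, I swap them. This realizes combinatorially the ``cancellation'' depicted in Figure \ref{fig:VertexStar}: performed first, $\beta$ glues two triangles that share only one edge (rather than two), so it becomes a \fold\ and hence an LC gluing; performed second, $\alpha$ glues two triangles that now share a whole edge (the one just created by $\beta$) rather than only the vertex $v$, so it too becomes an LC gluing. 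The resulting Mogami construction of $B$ from $T_N$ has one fewer non-LC gluing, completing the induction.

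The main obstacle will be the commutation step and, more precisely, the bookkeeping that assigns a well-defined partner healing $\beta$ to $\alpha$: one must track the components of the boundary-links of \emph{all} vertices simultaneously and check that no intermediate gluing interferes. This is exactly where the hypothesis ``no interior vertices'' is indispensable: it forces every boundary-link to eventually become a single cycle, pairing non-LC gluings with later healings, and Lemma \ref{lem:Impossible} rules out the wound-type configurations that would otherwise prevent the swap. Without this hypothesis the rearrangement fails, in line with Corollary \ref{cor:MogamiNotLC}.
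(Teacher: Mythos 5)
Your high-level strategy is the same as the paper's (pair each non-LC gluing with a later healing and cancel the two by reordering), but the implementation has a genuine gap at exactly the point you flag as ``the main obstacle'': the commutation of $\alpha$ past the intermediate gluings. When $\alpha$ identifies $v\ast\delta'$ with $v\ast\delta''$, it also identifies an endpoint $c'_0$ of $\delta'$ with an endpoint $c''_0$ of $\delta''$, and this identification is what closes the doomed arc of $\link(v,\partial\cdot)$ into the extra cycle $C$. The gluings strictly between $\alpha$ and $\beta$ include the ones that match up the edges of $C$ pairwise and shrink it to the digon that $\beta$ heals. Among these there is, in general, a gluing $\gamma$ that identifies the two triangles $v\ast e$ and $v\ast f$ with $e\ni c'_0$ and $f\ni c''_0$: it is an LC gluing \emph{only because} $\alpha$ has already been performed. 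Such a $\gamma$ is not disjoint from $\alpha$, is not one of the pathologies excluded by Lemma \ref{lem:Impossible}, and cannot be commuted past $\alpha$ without ceasing to be LC. So after your swap the construction may have just as many non-LC gluings as before, and the induction does not close.

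What is missing is a global reordering of the \emph{entire} matching of the doomed component, not a transposition of two steps. The paper does this via Lemma \ref{lem:LCorder2}: the matching of the edges of $C$ induced by the construction is a complete planar matching of a cycle, and one can re-schedule it so that the closure vertex $c_0$ stays active until the very last identification. Performed \emph{before} $\alpha$, that rescheduled sequence consists entirely of folds on the open arc (the former ``healing'' becomes a fold that identifies $c'_0$ with $c''_0$), after which $\Delta'_0$ and $\Delta''_0$ share the edge $[v,c_0]$ and $\alpha$ itself becomes LC. A second ingredient you omit entirely is that the triangles $v\ast e$ with $e\in C$ need not be matched with one another: some are matched with triangles outside $\star(v,\partial P)$, which expands $C$ by merging it with other boundary-links (Examples \ref{ex:HomA}--\ref{ex:NonHom}). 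The paper handles this with a preliminary sequence of ``rounds'' that anticipates all such external matchings, and uses Lemma \ref{lem:Impossible} to show the components of $\link(v,\partial\cdot)$ stay separate during these rounds; your proposal cites that lemma but for a different (and insufficient) purpose. Finally, a smaller remark: the paper works with the \emph{first} non-LC gluing rather than the last, which guarantees that just before $\alpha$ the boundary-link of $v$ is a single cycle and the ``doomed component'' is well defined; with your choice of the last non-LC gluing, earlier singularities may already be present and the bookkeeping of which component is killed by which healing becomes murkier.
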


\begin{proof}
If all Mogami gluings are LC gluings, there is nothing to prove. Otherwise, let us consider the first Mogami gluing $\Delta'_0 \equiv \Delta''_0$ that is not LC. Let $v =  \Delta'_0 \cap \Delta''_0$. By definition there are disjoint edges $\delta'_0, \delta''_0$ such that $\Delta'_0 = v \ast \delta'_0$ and $\Delta''_0 = v \ast \delta''_0$. Let $P$ be the pseudomanifold obtained after the gluing $\Delta'_0 \equiv \Delta''_0$; the vertex $v$ is in the boundary of $P$, while the triangle $\Delta_0$ obtained from the identification is in the interior of $P$. We denote by $\delta_0$ the edge opposite to $v$ in $\Delta_0$.
As we saw in Figure \ref{fig:VertexStar}, the gluing creates a singularity at $v$: namely, $\link(v, \partial P)$ consists of \emph{two} cycles. Since $B$ is a $3$-ball with all vertices on the boundary, the subsequent Mogami gluings in the construction of $B$  from $P$ will 
\begin{compactitem}
\item keep the vertex $v$ in the boundary, and
\item eventually ``kill'' one of the two connected components of $\link(v, \partial P)$. 
\end{compactitem}
Let us call $C$ the ``doomed'' component, that is, the cycle of  $\link(v, \partial P)$ none of whose edges will eventually appear in $\link(v, \partial B)$. Let us denote by $c_0$ the vertex of $\delta_0$ that belongs to $C$.

Our strategy is to consider this cycle $C$ and rearrange the sequence of gluings according to Lemma \ref{lem:LCorder2}, so that after the rearrangement, all gluings in the sequence are LC gluings, and the last pair of edges glued is a pair adjacent to the edge $\delta_0$. Before doing this, though, we need a delicate preliminary argument. In fact, while constructing $B$ from $P$, all triangles of $\star(v, \partial P)$ are going to be matched and sunk into the interior; but what we do not know for sure, is whether they are going to be matched to one another.  A priori, there are other two possibilities that we should consider (both of which could occur multiple times):
\begin{compactenum}[(a)]
\item for some edge $e$ of $C$, it could happen that $v \ast e$ is matched in an LC gluing with some triangle $w \ast e$ outside $\star(v,\partial P)$;
\item or it could also happen that $v \ast e$ is matched in a Mogami gluing to another triangle that does not contain $v$, but contains exactly one of the two endpoints (let us call it $x$) of $e$. 
\end{compactenum}
The steps above affect the boundary-link of $v$ as follows.
\begin{compactenum}[I)]
\item The cycle $C$ is ``expanded'' via a merging operation. For example, in case (a) the boundary-link of $v$ gets merged with the boundary-link of $w$, as explained in  Example \ref{ex:HomA}. Case (b) is similar: The vertex $v$ is identified with a vertex $v_2$ of the other triangle, and essentially the boundary-link of $v$ gets merged with the boundary-link of $v_2$ (after an LC gluing; compare Example \ref{ex:HomB}.)
\item Possibly, the link of $P$ might acquire further connected components. This happens when the vertex $w$ identified with $v$ is also a singularity, a case we saw  in Example \ref{ex:NonHom}. 
\end{compactenum}
These cases, however, do not ruin our proof strategy --- they just delay it. Our remedy in fact is to anticipate all matchings of the type (a) and (b) described above, in a ``first round'' of identifications. For example, if a single triangle $v \ast e$ is later matched in a {\fold} with some triangle $w \ast e$ outside $\star(v,\partial P)$, then we can rearrange the sequence by performing such LC gluing immediately. After all identifications of type (a) and (b) have been carried out, if  $P_1$ is the resulting pseudomanifold, we ask ourselves again: are all triangles of $\star(v,\partial P_1)$ going to be matched exclusively with one another? If not, we repeat the procedure above, in a second round of identifications, and we call the obtained pseudomanifold $P_2$. And so on. 

The effect of these rounds on the boundary-link of $v$ is to expand it by inglobating new edges. We make a crucial claim: in these rounds of identifications, the components of the boundary-link of $v$ remain separate. The proof of this claim relies on Lemma \ref{lem:Impossible}. In fact, suppose by contradiction that passing from $P_1$ to $P_2$, say, we have included into $C$ an edge $[a,b]$ that belongs to another component of the boundary-link of $v$ (which is what we have done in Example \ref{ex:Impossible}.) This means that in $P_2$ we have a singularity $v$, and two distinct triangles containing $v$ and the edge $[a,b]$. So if we want to obtain a simplicial complex, we are forced to glue the two triangles together; and with the same proof of Remark \ref{rem:Impossible}, no matter how we continue this Mogami construction, we are never going to achieve a $3$-ball without interior vertices. A contradiction. (This shows that the components of the boundary-link of $v$ never have an \emph{edge} in common; in analogous way, adapting Lemma \ref{lem:Impossible}, one proves they cannot have \emph{vertices} in common, either.) 

Eventually, after a finite number of rounds, we will reach a pseudomanifold $P'$ such that:
\begin{compactitem}
\item $\link(v,\partial P')$ consists of $k \ge 2$ connected components, 
\item $B$ is obtained with a list of Mogami gluings from $P'$, a process in which exactly $k-1$ of the components of $\link(v,\partial P')$ are going to be ``killed'',
\item if $C'$ is the connected component of $\link(v,\partial P')$ obtained from $C$ via merging operations, then for any edge $e$ of $C'$ there exists an edge $f$ of $C'$ such that, in one of the Mogami gluings that leads from $P'$ to $B$, the triangle $v \ast e$ is identified with $v \ast f$ .
\end{compactitem}
In fact, we can repeat the reasoning above until the last property holds for {\em all} the $k-1$ ``doomed'' connected components  of $\link(v,\partial P')$.

Note that $C'$ contains \emph{all} vertices of $C$. This is because the merging operation does not delete any vertex. In particular, the vertex $c_0= \delta_0 \cap C$ of $C$ will be present in $C'$ as well.

We are now in the position to use Lemma \ref{lem:LCorder2}. The Mogami construction that leads from $P'$ to $B$ yields a complete matching of the edges of $C'$. Clearly, ordering the edges in  $\link (v, \partial P')$ is the same as ordering the triangles in $\star (v, \partial P')$; also, two edges $e$, $f$ are adjacent in the link of $v$ if and only if $v \ast e$ and $v \ast f$ are adjacent in the star of $v$. Let us thus reorder the gluings involving triangles in $\star (v, \partial P')$, according to Lemma \ref{lem:LCorder2}, so that the vertex $c_0$ is deactivated last.   In this order, the identifications ``killing'' the component $C'$ are all LC gluings. Furthermore, it is easy to see that \emph{all} gluings mentioned above (those leading from $P$ to $P'$, plus all LC gluings that kill $C'$) can be performed before the identification $\Delta'_0 \equiv \Delta''_0$. With this postponement the step $\Delta'_0 \equiv \Delta''_0$ becomes an LC gluing: In fact, after all other identifications have been carried out, $\Delta'_0$ and $\Delta''_0$ share the edge $[v,c_0]$.
In conclusion, by reshuffling the Mogami sequence we got rid of the first non-LC step. By induction, we reach our claim. 
\end{proof}

\begin{corollary} \label{cor:main} Let $B$ be a $3$-ball without interior vertices. The following are equivalent:
\begin{compactenum}[\rm (1)]
\item $B$ is Mogami;
\item $B$ is LC;
\item some (possibly empty) sequence of {\spread} operations reduces $B$ to a tree of tetrahedra;
\item $B$ has trivial nuclei (that is, some sequence of {\spread} and {\split} operations reduces $B$ to disjoint tetrahedra.)
\end{compactenum}
\end{corollary}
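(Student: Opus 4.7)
The plan is to close the implications via the cycle $(1) \Leftrightarrow (2) \Leftrightarrow (3) \Leftrightarrow (4)$, leaning heavily on Theorem \ref{thm:main} for the crucial step and on known characterizations for the remaining bookkeeping.

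The equivalence $(1) \Leftrightarrow (2)$ is essentially free: since every LC gluing is a Mogami gluing, the direction $(2) \Rightarrow (1)$ is immediate, and the converse $(1) \Rightarrow (2)$ is precisely the content of Theorem \ref{thm:main} proved just above.

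For $(2) \Leftrightarrow (3)$, I would invoke the characterization from \cite[Lemma 3.18]{BZ} that in any LC construction of an interior-vertex-free $3$-ball, every LC gluing must in fact be a \fold\ move, because the absence of interior vertices prevents the more general LC gluings (where two boundary triangles share all three vertices but only one edge) from occurring. Given this, if $B$ is LC then reading its construction in reverse produces a sequence of \spread\ moves reducing $B$ to its underlying tree of tetrahedra $T_N$, since \spread\ is by definition the inverse of \fold. Conversely, reversing a spread-reduction of $B$ down to a tree of tetrahedra produces a sequence of \fold\ moves, hence a (legitimate) LC construction of $B$ from that tree; so $B$ is LC.

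For $(3) \Leftrightarrow (4)$, I combine two ingredients from the excerpt: the characterization in the Notation section that a tree of $N$ $d$-simplices is precisely a complex assembled from $N$ disjoint $d$-simplices via exactly $N-1$ \unite\ moves; and the Hachimori--Collet--Eckmann--Younan lemma (quoted just before Theorem \ref{thm:main}) stating that in any spread-and-split reduction, all \spread\ steps may be performed before any \split\ step. Direction $(3) \Rightarrow (4)$ is then immediate: after reducing $B$ to a tree $T_N$ by spreads, append $N-1$ \split\ moves to disassemble $T_N$ into $N$ disjoint tetrahedra. For $(4) \Rightarrow (3)$, fix a spread-and-split reduction of $B$ to disjoint tetrahedra with all spreads first, and let $P$ be the pseudomanifold reached after the last \spread. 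Then $P$ reduces to a disjoint union of tetrahedra using only \split\ moves, i.e.\ $P$ is obtained from disjoint tetrahedra via their inverses (\unite\ moves), so by the tree-characterization $P$ is a tree of tetrahedra, establishing $(3)$.

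The only non-trivial input beyond Theorem \ref{thm:main} is \cite[Lemma 3.18]{BZ}, which is the load-bearing step for $(2) \Leftrightarrow (3)$: without it, reversing an LC construction might yield ``generalized spread'' moves that identify an interior triangle whose entire boundary already lies in $\partial B$, rather than genuine \spread\ moves. The remaining work is purely organizational, just a matter of reversing sequences and reordering spreads before splits.
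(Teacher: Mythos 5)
Your proposal is correct and follows essentially the same route as the paper: Theorem \ref{thm:main} for $(1)\Leftrightarrow(2)$, the \cite[Lemma 3.18]{BZ} fold-only characterization reversed into \spread{} moves for $(2)\Leftrightarrow(3)$, and the \unite/\split{} characterization of trees of tetrahedra for $(3)\Leftrightarrow(4)$. Your treatment of $(4)\Rightarrow(3)$ is in fact slightly more explicit than the paper's, since you spell out the reordering of \spread{} before \split{} and the counting argument identifying the intermediate complex as a tree of tetrahedra.
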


\begin{proof}
``(1) $\Leftrightarrow$ (2)'' follows from Theorem \ref{thm:main}. 

``(2) $\Leftrightarrow$ (3)'': In \cite[Lemma 3.18]{BZ} it is shown that a $3$-ball $B$ without interior vertices is LC if and only if it can be obtained from a tree of tetrahedra via {\fold} steps. The conclusion follows by reversing the construction. 

``(3) $\Leftrightarrow$ (4)'': It follows from the characterization of tree of $N$ tetrahedra as the complexes that can be reduced to $N$ disjoint tetrahedra using $N-1$ {\split} operations. 
\end{proof}

\begin{corollary}\label{cor:Nuclei}
The only Mogami nucleus is the tetrahedron.
\end{corollary}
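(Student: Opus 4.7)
The plan is to deduce the statement directly from Corollary \ref{cor:main} by exploiting the incompatibility between the nucleus condition and the \spread{} move. A nucleus for $d=3$ is a $3$-ball without interior vertices in which every interior triangle has at least $2$ of its $3$ edges in the interior. On the other hand, a \spread{} step requires an interior triangle with \emph{exactly} one interior edge (and the other two edges on the boundary). These two conditions are mutually exclusive: no \spread{} can ever be performed on a nucleus.

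By Corollary \ref{cor:main}, the Mogami property for a $3$-ball without interior vertices is equivalent to the existence of a sequence of \spread{} operations reducing it to a tree of tetrahedra. Since a nucleus admits no such operation at all, the reducing sequence must be empty, which forces any Mogami nucleus to already be a tree of tetrahedra.

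The last step is to rule out non-trivial trees of tetrahedra from the class of nuclei. Let $T$ be a tree of $N \ge 2$ tetrahedra, and pick a leaf of its dual graph, i.e., a tetrahedron $\sigma$ glued to the rest along exactly one triangle $\Delta$. Then $\Delta$ is interior, but each of its three edges also belongs to one of the other three triangles of $\sigma$, each of which is a boundary triangle of $T$. Therefore all three edges of $\Delta$ are on $\partial T$, so $\Delta$ has $0$ interior edges, violating condition (2) of the nucleus definition. Hence $N=1$, and the only Mogami nucleus is the tetrahedron.

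The argument requires no serious obstacle: all the technical heavy lifting has been absorbed into Theorem \ref{thm:main} and Corollary \ref{cor:main}. The only point to be careful about is checking that the nucleus condition and the hypothesis enabling a \spread{} are truly contradictory, and that a tree of $\ge 2$ tetrahedra always contains a gluing triangle whose edges all lie on the global boundary—both of which are immediate from the definitions.
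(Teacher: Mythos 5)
Your proof is correct and follows essentially the same route as the paper: both deduce the statement from Corollary \ref{cor:main} together with the observation that the nucleus condition (every interior triangle has at most one boundary edge) makes the relevant reduction moves impossible. The paper invokes item (4) of that corollary (neither \spread{} nor \split{} can be applied to a non-trivial nucleus, so it cannot be reduced to disjoint tetrahedra), whereas you invoke item (3) and add the easy extra check that a tree of at least two tetrahedra is never a nucleus; this is an immaterial variation.
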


\begin{proof} Clearly a tetrahedron is Mogami. On any other nucleus, neither {\spread} nor {\split} steps are possible, because every interior triangle has at most one edge on the boundary. 
\end{proof}

\begin{corollary}
Some $3$-balls are not Mogami. 
\end{corollary}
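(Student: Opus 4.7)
The plan is to derive the statement as an almost immediate consequence of the results already established in the paper, by exhibiting (or invoking the existence of) a concrete $3$-ball that fails to be Mogami.

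My first approach would be to use Corollary \ref{cor:Nuclei} directly: since the only Mogami nucleus is the tetrahedron, it suffices to produce a non-trivial nucleus, i.e.\ a $3$-ball with all vertices on the boundary in which every interior triangle has at most one boundary edge. Such examples are well-known in the literature: Hachimori's simplicial version of Bing's thickened house with two rooms \cite{HachiLibrary}, as well as the nuclei catalogued by Collet--Eckmann--Younan \cite{CEY}, are all non-trivial nuclei. Each of them is a $3$-ball, and by Corollary \ref{cor:Nuclei} none of them is Mogami. This yields the statement.

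Alternatively, one could argue via Theorem \ref{thm:main}: any Mogami $3$-ball without interior vertices must be LC, so any non-LC $3$-ball without interior vertices produces a non-Mogami $3$-ball. Non-LC $3$-balls (for example those constructed from knotted spanning edges in \cite{BZ}) have been known for a decade, and any such example suffices. Both approaches rely entirely on results proved earlier in the paper or cited from the literature, so no new computation is required.

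There is essentially no obstacle to overcome here: the work has been done in Theorem \ref{thm:main} and Corollary \ref{cor:Nuclei}, and the corollary simply records the consequence. The only minor care needed is to point to a specific non-trivial nucleus in the literature (or equivalently a specific non-LC $3$-ball), so the reader can verify that the hypothesis of Corollary \ref{cor:Nuclei} (resp.\ of Theorem \ref{thm:main}) is genuinely violated by some existing example and not vacuously satisfied.
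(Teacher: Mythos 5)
Your first approach is exactly the paper's: it deduces the corollary from Corollary \ref{cor:Nuclei} by pointing to Hachimori's triangulation of Bing's thickened house with two rooms (a non-trivial nucleus with $1554$ tetrahedra) and the $37$-tetrahedron nucleus of \cite{CEY}. The proposal is correct and requires no further comment.
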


For example, Hachimori's triangulation of Bing's (thickened) house with two rooms, described in \cite{HachiLibrary} \cite[p.~89]{HachiThesis}, is a non-trivial nucleus with $1554$ tetrahedra.  
The smallest non-trivial nucleus found so far with computer tools has only $37$ tetrahedra \cite[p.~260]{CEY}. 

\subsection*{Relation with knots and collapsibility}
A {\mypink spanning edge} in a $3$-ball $B$ is an interior edge with both endpoints on the boundary. A spanning edge $[x,y]$ is called {\mypink knotted} if some (or equivalently, any) path in $\partial B$ from $x$ to $y$, together with the edge $[x,y]$, forms a non-trivial knot.   For brevity, we call a $3$-ball $B$ {\mypink knotted } if it contains a knotted spanning edge. 

Using the same exact proof of \cite[Proposition 3.19]{BZ}, one can obtain the following consequence of Corollary \ref{cor:main}:

\begin{lemma} \label{lem:Knot1}
Mogami $3$-balls without interior vertices do not contain knotted spanning edges. 
\end{lemma}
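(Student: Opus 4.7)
The plan is to deduce this lemma directly from Corollary~\ref{cor:main}, which identifies the Mogami and LC classes among $3$-balls without interior vertices. Since \cite[Prop.~3.19]{BZ} already rules out knotted spanning edges inside LC $3$-balls without interior vertices, the Mogami case should follow at once; it is exactly this reduction that the phrase ``same exact proof'' in the statement alludes to.

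For a self-contained sketch, I would reason by contradiction: suppose $B$ is a Mogami $3$-ball without interior vertices containing a knotted spanning edge $[x,y]$. By Corollary~\ref{cor:main}(3), some sequence of \spread\ operations reduces $B$ to a tree of tetrahedra $T_N$, and every vertex of $T_N$ still lies on the boundary because \spread\ preserves this property. I would first show that $[x,y]$ cannot be the interior edge of any triangle involved in one of these spreads: if $\Delta = [x,y,z]$ were such a triangle, then $[x,z]$ and $[y,z]$ would already lie in $\partial B$, so $\Delta$ itself would be a disk with boundary $[x,y] \cup [x,z] \cup [y,z]$, i.e., $[x,y]$ capped by a path in $\partial B$. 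Since $\partial B \cong S^2$ is simply connected, any two such capping paths are isotopic rel endpoints, so this disk witnesses that the knot type of $[x,y]$ is trivial, contradicting the hypothesis. Hence $[x,y]$ persists through the reduction as a spanning edge of $T_N$ with the same knot type.

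The final step is to invoke the fact that a tree of tetrahedra with all vertices on the boundary cannot contain a knotted spanning edge: for any interior edge, a spanning disk can be built inductively from the dual tree structure (alternatively, one uses that trees of tetrahedra are shellable together with classical knot-theoretic obstructions to shellability of knotted balls). This contradicts the persistence of a knotted $[x,y]$ in $T_N$, and closes the argument.

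The main obstacle is precisely this last step concerning trees of tetrahedra, which is the technical core of \cite[Prop.~3.19]{BZ}. The present lemma adds no new geometric input beyond Corollary~\ref{cor:main} and a straightforward transcription of that BZ argument.
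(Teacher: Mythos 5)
Your reduction is exactly the paper's: the proof given there is a single sentence that invokes Corollary \ref{cor:main} (equivalently Theorem \ref{thm:main}) to pass from Mogami to LC and then cites the argument of \cite[Proposition 3.19]{BZ} verbatim. Your self-contained sketch of that cited argument is a reasonable transcription, though note that your ``final step'' is actually trivial --- a tree of tetrahedra has no interior edges at all --- so the genuine technical content (delegated to \cite{BZ} both by you and by the paper) lies instead in tracking the knot type through the intermediate complexes produced by the \spread{} steps, whose boundaries are no longer $\partial B$.
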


Compare the result above with the following Lemma (which is known,  but we include a proof for completeness):

\begin{lemma} \label{lem:Knot2}
Every (tame, non-trivial) knot can be realized as knotted spanning edge in some $3$-ball without interior vertices.
\end{lemma}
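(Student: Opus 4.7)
Proof proposal. Realize the knot $K$ as a simple closed PL polygon in $\mathbb{R}^3$ with vertices $v_0 = x, v_1 = y, v_2, \ldots, v_{n-1}$, where $[x,y]$ is the distinguished edge and $\alpha$ is the complementary polygonal arc from $y$ through $v_2, \ldots, v_{n-1}$ back to $x$. The goal is a triangulated $3$-ball $B$ containing $[x,y]$ as an interior edge, with $\alpha$ a path in $\partial B$ and all vertices on $\partial B$; since $[x,y] \cup \alpha = K$ by construction, this exhibits $[x,y]$ as a knotted spanning edge.

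The plan is to first triangulate a regular neighborhood $M$ of $K$ (a solid torus) and then cut it open along a meridional disk to obtain a $3$-ball. For $M$, I would place a small transverse triangle at each polygon vertex $v_i$ with $i \ge 2$ and triangulate each connecting triangular prism between consecutive cross-sections as three tetrahedra (introducing no new vertices), arranging so that $\alpha$ appears as a sequence of simplicial edges on $\partial M$. At the endpoints $x, y$ of $\alpha$, the tube closes up via a \emph{wheel} structure around $[x,y]$: the tetrahedra $[x, y, u_i, u_{i+1}]$, with auxiliary vertices $u_1, \ldots, u_k$ forming a cycle $C$, fill in the region of $M$ around the straight segment $[x,y]$. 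With this design, $[x,y]$ is an interior edge of $M$ surrounded by a cyclic wheel (combinatorially, the simplicial join $[x,y] * C$), while all auxiliary vertices $u_i$ and all polygon vertices $v_i$ lie on $\partial M$.

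Next, cut $M$ along a meridional disk $D'$ transverse to $\alpha$ at an interior point of some edge of $\alpha$ far from $x$ and $y$; combinatorially, this corresponds to duplicating one of the interior cross-section triangles of the tube. The result $B$ is a $3$-ball, since $M$ is a solid torus and cutting a solid torus along a meridional disk yields a ball. The cut does not disturb the wheel around $[x,y]$, so $[x,y]$ remains an interior edge of $B$ with $x, y \in \partial B$; it introduces no new interior vertices (the two copies of $D'$ lie on $\partial B$); and $\alpha$ remains a path in $\partial B$ from $x$ to $y$, now crossing the cut via the two copies of the duplicated triangle. Hence $[x,y] \cup \alpha = K$ realizes $K$ as a knotted spanning edge of $B$, and all vertices of $B$ lie on $\partial B$.

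The main obstacle I anticipate is the explicit design of the transition region between the wheel around $[x,y]$ and the prismatic tube around $\alpha$: the two endcap cross-sections of the tube (at $x$ and at $y$) must mesh simplicially with the wheel $[x,y] * C$, producing a genuine simplicial complex in which all transition vertices are on $\partial M$. This is a bookkeeping exercise that can always be resolved, if necessary by enlarging the cycle $C$ and subdividing the endcap prisms so that the two endcap triangles can share the two common vertices of $C$ required by the wheel's bipyramidal boundary. Once the combinatorial data is fixed, the construction yields the desired $3$-ball $B$ without interior vertices.
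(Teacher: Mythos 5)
Your construction is genuinely different from the paper's (which uses Furch's classical trick: dig a tunnel shaped like the knot into a finely cubulated ball, stop one step before breaking through, and then remove interior vertices by shelling away the cubes far from the knot, or by greedy {\spread} and {\split} moves). Unfortunately it does not work, and the problem is not the combinatorial bookkeeping you flag at the end but the topology: the edge $[x,y]$ you produce is an \emph{unknotted} spanning edge. The ball $B$ you build is a regular neighborhood of an arc of $K$ --- a thickened arc --- and $[x,y]$ is a sub-arc of its core. By the paper's definition, the knot type attached to a spanning edge may be computed using \emph{any} path in $\partial B$ from $x$ to $y$ (any two such paths cobound a disk in the $2$-sphere $\partial B$, so the resulting closed curves are isotopic; this is the parenthetical ``or equivalently, any''). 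Now take the boundary path that runs along the surface of the tube parallel to the segment $[x,y]$ --- concretely, a path such as $x \to u_i \to y$ through a wheel vertex. Together with $[x,y]$ it cobounds an embedded disk inside the tube (e.g.\ the triangle $[x,y,u_i]$, or a thin rectangle between $[x,y]$ and its push-off), so the closed curve obtained is the unknot. Hence $[x,y]$ is not knotted in $B$, no matter how carefully the endcaps are meshed.

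The error is hidden in the sentence claiming that ``$\alpha$ remains a path in $\partial B$ from $x$ to $y$, now crossing the cut via the two copies of the duplicated triangle.'' It does not: cutting the solid torus $M$ along the meridional disk $D'$ duplicates the point $D' \cap \alpha$ into two distinct points of $\partial B$, so $\alpha$ is severed and does not reconnect. Any path in $\partial B$ joining the two copies of the cut point must traverse the entire length of the tube again, so the closed curve one actually obtains from a genuine boundary path is not $K$. This failure is unavoidable rather than fixable: knottedness of a spanning edge is an invariant of the pair $(B,[x,y])$ (equivalently, of the complement of the edge inside $B$), and a thickened arc containing a piece of its own core is always the trivial ball--arc pair. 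The knotting must be stored in the \emph{complement} of the edge inside the ball, which is exactly why Furch's construction is built ``inside out'': the ball is a large block minus a knotted tunnel plus one final plug, the spanning edge lies in that plug, and the complement of the edge inside the ball is the knotted tunnel's exterior.
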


\begin{proof}
The following classical construction goes back to Furch \cite{Furch}: Let us dig a hole, shaped like the chosen knot, inside a suitably fine pile of cubes, stopping one step before destroying the property of having a $3$-ball. Let us then triangulate every cube according to a standard pulling triangulation. This construction is carried out in detail for the trefoil knot by Hachimori \cite[model ``Furch's knotted hole ball'']{HachiLibrary}; compare also \cite[Example 2.14 \& Figure 3]{BZ}. Observe that the $3$-ball obtained with such construction typically contains plenty of interior vertices. However, we can progressively ``shell out'' all cubes that are far away from the knot, until we reach a thinner triangulation without interior vertices. (Another way to reach such triangulation is to simply apply the spread and the split operations greedily. It is an easy topological exercise, essentially analogous to the proof of Lemma \ref{lem:Knot1}, to check that none of these operations can delete or modify the existing knot.)
\end{proof}

From Lemma \ref{lem:Knot1} and  Lemma \ref{lem:Knot2}, we can find \emph{infinitely many} examples of $3$-balls that are not Mogami. 
In fact, we can prove an asymptotic enumeration result:

\begin{proposition} \label{prop:asymptotics}
In terms of the number $N$ of facets, the number of non-Mogami $3$-balls without interior vertices is asymptotically the same of the total number of $3$-balls without interior vertices. 
\end{proposition}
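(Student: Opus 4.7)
The plan is to produce asymptotically many non-Mogami $3$-balls without interior vertices by attaching a fixed knotted piece to arbitrary balls. The first step is to invoke Lemma~\ref{lem:Knot2} to fix once and for all a $3$-ball $K$ without interior vertices containing a trefoil-knotted spanning edge $[x,y]$; let $k$ be the number of tetrahedra of $K$, which we may assume large enough that $K$ has a boundary triangle $t_K$ none of whose vertices is $x$ or $y$.

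Next, given any $3$-ball $B$ without interior vertices with $N$ tetrahedra and any boundary triangle $t_B$ of $B$, one forms $B':=B\cup_{t} K$ by identifying $t_B$ with $t_K$ along some chosen simplicial bijection. The claim is that $B'$ is a simplicial $3$-ball without interior vertices still containing the knotted spanning edge $[x,y]$, and hence is non-Mogami by Lemma~\ref{lem:Knot1}. Simpliciality holds because only the three vertices of $t$ are identified, so any two tetrahedra on opposite sides of $t$ share exactly the face $t$; $B'$ is a $3$-ball because it arises from gluing two $3$-balls along a disk on each boundary; and every vertex of $B'$ lies on $\partial B'=(\partial B\setminus \intx t_B)\cup(\partial K\setminus \intx t_K)$. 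A closing path $p$ from $x$ to $y$ can be routed entirely inside the disk $\partial K\setminus \intx t_K$, embedded into $\partial B'$, so the loop $[x,y]\cup p$ lies inside $K\subset B'$ and realizes the same trefoil as inside $K$.

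For the counting, let $b(N)$ (resp.\ $\bar m(N)$) denote the number of combinatorial types of (non-Mogami) $3$-balls without interior vertices with $N$ tetrahedra. One has trivially $\bar m(N)\le b(N)$. The assignment $(B,t_B)\mapsto B'$ above lands in the set enumerated by $\bar m(N+k)$. A given $B'$ arises as the image of $(B,t_B)$ only through an interior triangle of $B'$ along which $B'$ splits as a copy of $K$ united with some complementary ball; there are at most $O(N)$ such triangles in $B'$, and only a constant number of simplicial automorphisms of $K$ (and of $t_K$) for each, so the map is at most polynomial-to-one. This yields $\bar m(N+k)\ge b(N)/p(N)$ for some polynomial $p$, and combined with the trivial $\bar m(N)\le b(N)$ gives $\limsup_{N\to\infty} b(N)^{1/N}=\limsup_{N\to\infty} \bar m(N)^{1/N}$, which is the precise ``asymptotic equivalence'' available in this setting of at-most-exponentially-growing families of triangulations.

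The main obstacle is showing that the attachment cannot unknot the arc $[x,y]$ inside the larger ball $B'$. The cleanest way to verify this is to keep the closing path $p$ entirely inside the disk $\partial K\setminus \intx t_K\subset\partial B'$, avoiding all triangles newly contributed by $B$; then the knot $[x,y]\cup p$ lives inside the sub-ball $K\subset B'$, and its knot type in any ambient $S^3$ containing $B'$ matches its knot type in any ambient $S^3$ containing $K$. Once this topological point is secured, the combinatorial argument reduces to the polynomial multiplicity bound just described.
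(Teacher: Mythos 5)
Your proposal is correct and follows essentially the same route as the paper: fix one knotted $3$-ball without interior vertices from Lemma~\ref{lem:Knot2}, attach it to an arbitrary ball $B$ by a \unite{} step along a boundary triangle, observe that the result still has all vertices on the boundary and still contains the knotted spanning edge (hence is non-Mogami by Lemma~\ref{lem:Knot1}), and recover $B$ from the glued ball up to an asymptotically negligible multiplicity. Your polynomial-to-one bookkeeping is just a slightly more explicit version of the paper's injectivity-modulo-rotations argument, so there is nothing substantive to add.
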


\begin{proof}
Let us fix a $3$-ball  $A$ with  some knotted spanning edge and with all vertices on $\partial A$ (cf.~Lemma \ref{lem:Knot2}). Let $F_A$ be the number of facets of $A$. Let us also fix a triangle $\Delta_A \subset \partial A$. Now, let $B$ be an arbitrary $3$-ball with $N $ tetrahedra, without interior vertices, and with a distinguished triangle $\Delta_B \subset \partial B$. From $B$ we can obtain a $3$-ball $B'$ with $N+F_A$ tetrahedra via a {\unite} step that consolidates the $3$-balls $A$ and $B$ by identifying $\Delta_A \equiv \Delta_B$. (Ignore the fact that there are multiple ways to do this, according to rotation, as this amounts to an asymptotically neglectable factor.) No matter how we choose $B$, the union $B'=A \cup B$ is going to contain the same knotted spanning edge of $A$. But  since all its vertices are on the boundary, by Lemma  \ref{lem:Knot1} the ball $B'$ cannot be Mogami. Now note that $B'$ determines $B$: In fact, for any interior triangle $\Delta$ of $B'$ with all three edges on $\partial B'$, we could split $B'$ at $\Delta$ and check if one of the two $3$-balls obtained is combinatorially equivalent to $A$ (if it is, the other $3$-ball is $B$). Hence the transition from $B$ to $B'$ yields an injective map 
\[ 
\left\{ \begin{array}{l}
\textrm{$3$-balls with $N$ tetrahedra } \\
\textrm{and with $0$ interior vertices}  
\end{array} \right \} \: \:
\longhookrightarrow \: \:
\left\{ \begin{array}{l}
\textrm{non-Mogami $3$-balls with $N+F_A$ tetrahedra} \\
\textrm{and with $0$ interior vertices}  
\end{array} \right \} \]
If we pass to the cardinalities and let $N$ tend to infinity, $F_A$ being constant, we conclude.
\end{proof}

Finally, we recall the connection of knot theory with simplicial collapsibility:

\begin{proposition}[{essentially Goodrick, cf. \cite[Corollary 4.25]{Benedetti-DMT4MWB}}] \label{prop:2gen}
Let $K$ be any knot whose group admits no presentation with $2$ generators. (For example, the double trefoil). Any knot with a knotted spanning edge isotopic to $K$, cannot be collapsible.
\end{proposition}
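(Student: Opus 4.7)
The plan is to argue by contradiction, adapting the classical strategy of Goodrick that is spelled out in detail as \cite[Corollary 4.25]{Benedetti-DMT4MWB}. Assume that $B$ is a collapsible $3$-ball containing a knotted spanning edge $e=[x,y]$ whose associated knot is isotopic to $K$. The goal is to extract from a collapsing sequence of $B$ a presentation of $\pi_1(S^3\setminus K)$ using only two generators, contradicting the hypothesis on $K$.

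First I would take an arbitrary collapsing sequence $B=B_0\searrow B_1\searrow\cdots\searrow B_N=\{v\}$ and reorder the elementary collapses so that no simplex containing the edge $e$ is deleted until the end. Commuting elementary collapses past one another is a purely combinatorial operation, and since $B$ ultimately collapses to a point the edge $e$ must eventually be collapsed; pushing that moment to the tail of the sequence produces a collapse of $B$ onto $e$ (plus a few auxiliary faces forced by the local picture around $e$).

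Next I would transfer the information to the knot complement. Coning off $\partial B$ inside some ambient $S^3$ turns $e$ into a representative of the knot $K$, and a tubular regular neighborhood $N(e)$ can be carved out so that $B\setminus \mathrm{int}\,N(e)$ is homotopy equivalent to $S^3\setminus K$. Reading the reordered collapse along this complement equips it with a CW structure having a single $0$-cell, at most two $1$-cells (roughly, the meridian of $e$ and one longitudinal generator produced by the final collapse steps near $e$), and several $2$-cells from the simplices swept across. This presentation shows that $\pi_1(S^3\setminus K)$ is $2$-generated, which contradicts the hypothesis on $K$ (note that the double trefoil has bridge number $3$, hence its group has generator rank $\ge 3$, so the proposition applies).

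The main obstacle is the bookkeeping in the middle step: one has to track how each elementary collapse in $B$ affects the frontier of $N(e)$, and verify that the two ``extra'' $1$-cells really suffice, independently of how complicated the intermediate complexes $B_i$ become. This local analysis is exactly the heart of Goodrick's argument and of \cite[Corollary~4.25]{Benedetti-DMT4MWB}. Once that analysis is in place, the $2$-generator presentation drops out almost formally and the contradiction closes the proof.
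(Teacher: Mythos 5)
First, note that the paper does not actually prove this proposition: it imports it verbatim from \cite[Corollary 4.25]{Benedetti-DMT4MWB}, so what you are reconstructing is the proof given there (originally due to Goodrick). Your overall target --- showing that a collapse of $B$ forces $\pi_1(|B|\setminus|e|)\cong\pi_1(S^3\setminus K)$ to be $2$-generated --- is the right one, and the identification of $\pi_1(|B|\setminus|e|)$ with the knot group via coning off $\partial B$ is correct. But your first step contains a genuine error. Elementary collapses cannot be freely commuted: if you postpone the pair that removes $e$ (together with a triangle $t\supset e$), every later face whose freeness depended on $t$ being gone --- for instance an edge or vertex of $t$ other than $e$ --- is no longer free, and the sequence breaks. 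More to the point, it is simply not true that a collapsible complex collapses onto an arbitrarily prescribed edge; the sensitivity of collapsing sequences to the order of the moves is precisely the theme of the paper's discussion of extensive collapsibility (Remark \ref{rem:difference}), and conflating ``some collapsing sequence works'' with ``any reordering works'' is the same trap. The actual argument needs no reordering: one takes an \emph{arbitrary} collapse of $B$ to a point, observes that the interior edge $e$ is removed in exactly one elementary collapse, and shows that the induced handle decomposition of the complement of a regular neighborhood $N(e)$ has exactly two $1$-handles --- one dual to the meridian of $e$ and one contributed by that unique collapse step involving $e$.

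Second, even setting the reordering aside, the quantitative heart of the statement --- that the complement really acquires a CW or handle structure with at most two $1$-cells, no matter how the collapse winds around $e$ --- is exactly what you defer to ``Goodrick's local analysis''; as written, the proposal asserts the bound rather than proving it, so nothing beyond the statement itself has been established. A small additional slip: bridge number $3$ for the double trefoil gives an \emph{upper} bound of $3$ on the rank of the knot group (the meridians over the bridges generate it), not a lower bound; the fact that the group of the connected sum of two trefoils admits no $2$-generator presentation is a separate, nontrivial result, which is why the proposition states it as a hypothesis on $K$ rather than deriving it from the bridge number.
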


\begin{proposition}[{\cite[Theorem 3.23]{BZ}}] \label{prop:2bridge}
For any $2$-bridge knot $K$ (for example, the trefoil), there is a collapsible $3$-ball without interior vertices with a knotted spanning edge isotopic to $K$.
\end{proposition}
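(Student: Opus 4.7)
The plan is to construct the required $3$-ball $B$ explicitly, by starting from a Furch-type ``cube-with-a-knotted-hole'' and then exploiting the $2$-bridge structure to exhibit both a collapsing sequence and a sequence of {\spread}/{\split} operations removing all interior vertices. The proposition parallels Proposition \ref{prop:2gen} in shape, but in the opposite direction: instead of an obstruction to collapsibility, we want a construction.

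First I would fix a $2$-bridge presentation of $K$: $K$ decomposes as $\alpha_1 \cup \alpha_2$, where each $\alpha_i$ is an unknotted arc inside a closed ball $H_i$, the two balls meeting in a $2$-sphere transverse to $K$ in $4$ points, so that $(H_i, \alpha_i)$ is a trivial tangle. Cut $K$ at one interior point of $\alpha_2$ to obtain a spanning arc $\gamma$ with endpoints $x,y$. Take a big cube $C$ containing $\gamma$ and drill an open regular neighborhood of $\gamma$, except near $x$ and $y$, and insert the straight edge $[x,y]$ across the hole; call the resulting simplicial $3$-ball $B_0$, triangulated as in Hachimori's library entry for Furch's ball~\cite{HachiLibrary}. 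By construction, $[x,y]$ is a spanning edge of $B_0$ and, together with a path in $\partial B_0$, recovers a curve isotopic to $K$.

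The heart of the argument is to show that, for $2$-bridge knots, one can perform this construction so that $B_0$ (or a suitable refinement of it) is \emph{simplicially collapsible}. The key is that the triviality of each tangle $(H_i,\alpha_i)$ means $\alpha_i$ is isotopic rel endpoints into $\partial H_i$; at the triangulated level this isotopy can be realized as a shelling of the thickened tangle onto a thickened boundary disk, and shellings are collapses. So $B_0$ first collapses from the $H_1$-side tubular neighborhood down onto its intersection with the splitting $2$-sphere, then from the $H_2$-side down onto a collapsible neighborhood of the single spanning edge $[x,y]$, and finally onto a point. This is precisely where the $2$-bridge hypothesis is used: for $b$-bridge knots with $b \ge 3$, the two sub-tangles cease to be trivial and no such layer-by-layer shelling exists.

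Finally, to achieve the ``no interior vertices'' condition, I would apply greedy {\spread} and {\split} operations to $B_0$ as in the lemma of Hachimori--Collet--Eckmann--Younan. Neither operation destroys a spanning knotted edge (this is the topological observation already used in the proof of Lemma \ref{lem:Knot2}), and both preserve collapsibility: a {\spread} is the reverse of a {\fold}, which acts as an elementary collapse of the interior triangle, and {\split} disconnects into two collapsible pieces iff the original was collapsible across that face. After the greedy reduction we arrive at a $3$-ball $B$ without interior vertices, still collapsible, still carrying a knotted spanning edge isotopic to $K$. The main obstacle is the rigorous verification that the layer-by-layer shelling in step three can be carried out simplicially: this requires committing to a specific triangulation compatible with the $2$-bridge decomposition and checking that free triangles appear in the correct order. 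Lickorish--Martin's original argument (as transcribed in \cite[Theorem 3.23]{BZ}) handles exactly this point by choosing a pulling triangulation of the tubular neighborhood of each $\alpha_i$ that mirrors its isotopy into $\partial H_i$.
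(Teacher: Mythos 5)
The paper gives no proof of this proposition: it is quoted directly from \cite[Theorem 3.23]{BZ}, so the only ``proof'' in the text is the citation. Your reconstruction has the right general shape (a Furch-type drilled cube, an argument that the $2$-bridge structure forces collapsibility, a reduction to the no-interior-vertex case), but its two load-bearing steps have genuine gaps. The central one is the claim that triviality of the tangles $(H_i,\alpha_i)$ yields a layer-by-layer shelling, hence a collapse. In a $b$-bridge decomposition \emph{both} tangles are trivial for every $b$ --- what grows with $b$ is the number of strands, not the knottedness of the tangles --- so your assertion that for $b\ge 3$ ``the two sub-tangles cease to be trivial'' is false. Without that (wrong) cutoff, your argument would apply verbatim to the double trefoil, whose bridge number is $3$, and would contradict Proposition \ref{prop:2gen}. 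In other words, the sketch does not actually isolate what the $2$-bridge hypothesis buys; that is precisely the delicate point, and your last sentence defers it back to the source being proved.

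The clean-up step also fails as stated: with the definitions used in this paper, {\spread} and {\split} can never turn an interior vertex into a boundary vertex. A {\spread} applies only to an interior triangle having two of its three edges (hence all three of its vertices) already on the boundary, and {\split} applies only to an interior triangle with $\partial\Delta\subset\partial Q$; neither move changes the boundary/interior status of any vertex, so a greedy {\spread}/{\split} reduction of $B_0$ ends with every interior vertex of $B_0$ still interior. (The proof of Lemma \ref{lem:Knot2} gets rid of interior vertices by shelling away cubes far from the knot, which genuinely exposes new boundary vertices.) Your justification that these moves preserve collapsibility is also unsound: a {\spread} un-identifies two triangles and deletes nothing, so it is not ``an elementary collapse of the interior triangle,'' and no argument is given that {\fold} or {\spread} preserves collapsibility. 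The honest course here is the one the paper takes: invoke \cite[Theorem 3.23]{BZ}, whose construction produces the collapsible knotted ball without interior vertices directly.
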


Summing up, we have the following hierarchy:

\begin{theorem} \label{thm:final}
For $3$-balls \emph{without interior vertices}, the following inclusions hold: \em
\[
\{  \textrm{shellable}  \} \subsetneq 
\{  \textrm{LC}  \} =
\{  \textrm{Mogami}  \} \subsetneq
\{  \textrm{collapsible}  \} \subsetneq 
\{  \textrm{all $3$-balls without interior vertices}  \}   .
\]
\end{theorem}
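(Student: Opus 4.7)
The plan is to verify each inclusion in the chain by assembling results already established earlier in the paper, and then to witness each strict inclusion by an explicit example (mostly of knot-theoretic flavor).

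For the inclusions themselves: \textbf{shellable $\Rightarrow$ LC} is the classical fact already cited in the proof of Corollary~\ref{cor:MogamiNotLC}. The equality $\{\mathrm{LC}\} = \{\mathrm{Mogami}\}$ is exactly Theorem~\ref{thm:main} together with the trivial reverse inclusion (every LC gluing is a Mogami gluing). For $\{\mathrm{LC}\} \subseteq \{\mathrm{collapsible}\}$, I would invoke the known result from~\cite{BZ} that every LC $3$-ball is collapsible; alternatively, in our restricted setting one can argue directly from Corollary~\ref{cor:main}(3), combined with the standard observation that each \fold{} step, performed on a (trivially collapsible) tree of tetrahedra, preserves collapsibility. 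The final inclusion is tautological.

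For the three strict inclusions I propose the following witnesses. For \textbf{shellable $\subsetneq$ LC}, I would cite a classical non-shellable $3$-ball without interior vertices that is nevertheless LC, for example one of the balls produced in~\cite{BZ}; LC-ness can be certified by running Hachimori's greedy \spread{} algorithm and observing that it terminates at a tree of tetrahedra. For \textbf{LC $\subsetneq$ collapsible}, I would combine Proposition~\ref{prop:2bridge} (a collapsible $3$-ball without interior vertices containing a trefoil-knotted spanning edge) with Lemma~\ref{lem:Knot1} (no Mogami $3$-ball without interior vertices can contain a knotted spanning edge); in view of the equality LC $=$ Mogami just proven, such a ball is collapsible but not LC. For \textbf{collapsible $\subsetneq$ all $3$-balls without interior vertices}, I would combine Lemma~\ref{lem:Knot2} (realization of the double trefoil as a knotted spanning edge in some $3$-ball without interior vertices) with Proposition~\ref{prop:2gen} applied to the double trefoil, whose knot group admits no $2$-generator presentation, thereby forbidding collapsibility.

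The only non-routine piece I anticipate is the LC $\Rightarrow$ collapsible step: although well known in this circle of ideas, a self-contained verification requires a careful induction on the number of \fold{} moves producing $B$ from a tree of tetrahedra, using the absence of interior vertices to ensure that each freshly created interior triangle retains enough ``freeness'' to be matched with an elementary collapse. Everything else reduces to assembling already-established results of this section, so the theorem will essentially be a bookkeeping corollary of Theorem~\ref{thm:main} and Lemmas~\ref{lem:Knot1},~\ref{lem:Knot2} together with Propositions~\ref{prop:2gen} and~\ref{prop:2bridge}.
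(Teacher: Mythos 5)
Your proposal is correct and follows essentially the same route as the paper: the equality comes from Theorem~\ref{thm:main}/Corollary~\ref{cor:main}, the strictness of LC~$\subsetneq$~collapsible from Proposition~\ref{prop:2bridge} plus Lemma~\ref{lem:Knot1}, and the strictness of the last inclusion from Lemma~\ref{lem:Knot2} plus Proposition~\ref{prop:2gen} applied to the double trefoil. The paper's only concrete additions are naming Rudin's ball (a non-shellable yet LC subdivision of the tetrahedron with all $14$ vertices on the boundary) as the witness for the first strict inclusion, and citing Chillingworth and \cite{BZ} for the collapsibility/LC-ness of linear subdivisions of polytopes rather than re-deriving LC~$\Rightarrow$~collapsible.
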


\begin{proof}
Any linear subdivision of a (convex) $3$-dimensional polytope (with or without interior vertices) is collapsible \cite{Chillingworth} and even LC \cite[Theorem 3.27]{BZ}. However, Rudin proved in 1958 that not all these linear subdivisions are shellable \cite{Rudin}; her counterexample, known as ``Rudin's ball'', is a subdivision of a tetrahedron with all $14$ vertices on the boundary. The equivalence of LC and Mogami is discussed in Corollary \ref{cor:main}. Any knotted $3$-ball described in Proposition \ref{prop:2bridge} is collapsible, but cannot be Mogami by Lemma \ref{lem:Knot1}. Finally, $3$-balls without interior vertices that are not collapsible can be produced by pairing together Lemma \ref{lem:Knot2} and Proposition \ref{prop:2gen}: For example, any $3$-ball without interior vertices and with a knotted spanning edge isotopic to the double trefoil would do.
\end{proof}

\section*{Acknowledgments}
The author wishes to thank G\"unter M.~Ziegler for improving the introduction and Jean--Pierre Eckmann for useful discussions.



\end{document}